\let\footnote=\endnote
\theoremstyle{plain} \newtheorem{assumption}{\mdseries\scshape Assumption}
\theoremstyle{plain} 
\theoremstyle{plain} \newtheorem{theorem}{\mdseries\scshape Theorem}
\theoremstyle{plain} \newtheorem{corollary}{\mdseries\scshape Corollary}
\theoremstyle{plain} 
\theoremstyle{plain} \newtheorem{example}{\mdseries\scshape Example}
\numberwithin{equation}{section}
\begin{document}

\begin{frontmatter}

\title{\textbf{\normalsize INFERENCE ON TWO-COMPONENT MIXTURES UNDER TAIL RESTRICTIONS\thanksref{H}}}
\runtitle{Inference on mixtures}

\begin{aug}
\author{\fnms{\small Koen} \snm{\small Jochmans}\thanksref{a}\ead[label=e1]{koen.jochmans@sciencespo.fr}}
\and
\author{\fnms{\small Marc} \snm{\small Henry}\thanksref{b}\ead[label=e2]{marc.henry@psu.edu}}
\and
\author{\fnms{\small Bernard} \snm{\small Salani\'e}\thanksref{c}\ead[label=e3]{bsalanie@columbia.edu}}
\address[a]{Sciences Po, 28 rue des Saints P\`eres, 75007 Paris, France. \\ \printead{e1}} 
\address[b]{The Pennsylvania State University, University Park, PA 16801, U.S.A. \\ \printead{e2}} 
\address[c]{Columbia University, 420 West 118th Street, New York, NY 10027, U.S.A. \\ \printead{e3}} 

\thankstext{H}{We are grateful to Peter Phillips, Arthur Lewbel, and three referees for comments and suggestions, and to Victor Chernozhukov and Yuichi Kitamura for fruitful discussions. Parts of this paper were written while Henry was visiting the University of Tokyo Graduate School of Economics and while Salani\'e was visiting the Toulouse School of Economics. The hospitality of both institutions is gratefully acknowledged. Jochmans' research has received funding from the SAB grant ``Nonparametric estimation of finite mixtures''. Henry's research has received funding from the SSHRC Grants 410-2010-242 and 435-2013-0292, and NSERC Grant 356491-2013. Salani\'e thanks the Georges Meyer endowment. Some of the results presented here previously circulated as part of  \cite{HenryKitamuraSalanie2010}, whose published version (\citealt{HenryKitamuraSalanie2014}) only contains results on partial identification.}  

\runauthor{K.~Jochmans, M.~Henry, and B.~Salani\'e} 
\affiliation{Sciences Po \and Penn State \and Columbia}
\end{aug}

\begin{abstract}
Final version: \today 

\medskip\noindent Many econometric models can be analyzed as f\mbox{}inite mixtures. We focus on two-component mixtures and we show that they are nonparametrically point identif\mbox{}ied by a combination of an exclusion restriction and tail restrictions. Our identif\mbox{}ication analysis suggests simple closed-form estimators of the component distributions and mixing proportions, as well as a specification test. We derive their asymptotic properties using results on tail empirical processes and we present a simulation study that documents their f\mbox{}inite-sample performance.
\begin{keyword} 
\kwd{mixture model} \kwd{nonparametric identification and estimation} 
\kwd{tail empirical process} 
\end{keyword}

\end{abstract}

\end{frontmatter}

\baselineskip=0.6cm
\thispagestyle{empty}

\section*{Introduction} 

The use of f\mbox{}inite mixtures has a long history in applied econometrics. A non--exhaustive list of applications includes models with discrete unobserved heterogeneity, hidden Markov chains, and
models with mismeasured discrete variables; see \cite{HenryKitamuraSalanie2014} for a more extensive discussion of applications. Until recently, the literature on nonparametric identif\mbox{}ication of mixture models was sparse. Following the lead of \cite{HallZhou2003}, several authors have analyzed multivariate mixtures; recent contributions are \cite{KasaharaShimotsu2009}, \cite{AllmanMatiasRhodes2009}, and \cite{BonhommeJochmansRobin2014b,BonhommeJochmansRobin2014a}. There are fewer identifying restrictions available when the model of interest is univariate. \cite{BordesMotteletVandekerkhove2006}, for instance, provide such restrictions for location models with symmetric error distributions.

In this paper we give suf\mbox{}f\mbox{}icient conditions that point-identify univariate component distributions and associated mixing proportions. The restrictions we rely on are most ef\mbox{}fective in two-component models; and to simplify the analysis, we focus on this case, like \cite{HallZhou2003} and
\cite{BordesMotteletVandekerkhove2006}. We comment briefly on mixtures with more components at the end of the paper. Our arguments are constructive, and we propose closed-form estimators for both the component distributions and the mixing proportions. We derive their large-sample
properties and we propose a specification test. Finally, we investigate the behavior of our inference tools  in a simulation experiment.

The model we consider in this paper is characterized by an exclusion restriction and a tail-dominance assumption. Like  \cite{HenryKitamuraSalanie2014}, we assume the existence of a source of variation
that shifts the mixing proportions but leaves the component distributions unchanged. Such an assumption is natural in several important applications, such as  measurement-error models (\citealt{Mahajan2006}), for example. In hidden Markov models, it follows directly from the model specif\mbox{}ication. The exclusion restriction is also implied by the conditional-independence restriction that underlies the results of \cite{HallZhou2003} and others on multivariate mixtures.

\cite{HenryKitamuraSalanie2014} have shown that our exclusion restriction implies that both the mixing proportions and the component distributions lie in a non-trivial set. However, they only proved partial identification, and they did not discuss inference.  Here we achieve point-identif\mbox{}ication by complementing the exclusion restriction with a restriction on the relative tail behavior of the component distributions. This restriction is quite natural in location models, for instance, but it can be motivated more generally. Regime-switching models typically feature regimes with dif\mbox{}ferent tail behavior, for example. Alternatively, theoretical models can imply the required tail behavior; an example is the search and matching model of \cite{ShimerSmith2000}, as explained in \cite{DhaultfoeuilleFevrier2014}.

Our identif\mbox{}ication argument suggests plug-in estimators of the mixing proportions and the component distributions that are available in closed form. The estimators are based on ratios of intermediate
quantiles, and their convergence rate is determined by the theory of tail empirical processes. As we rely on the tail behavior of the component distributions to infer the mixing proportions, our estimators converge more slowly than the
parametric rate. If the mixing proportions were known---or could be estimated at the parametric rate---the tail restrictions could be dispensed with and the implied estimator of the component distributions would also converge at the parametric rate.

Our estimators are consistent under very weak tail dominance assumptions. To control for asymptotic bias in their limit distribution we need to impose stronger requirements that prevent the tails of the components from vanishing too quickly. These assumptions rule out the Gaussian location model. Such thin-tailed distributions are known to be problematic for inference techniques that rely on tail behavior (\citealt{KhanTamer2010}). However, we show that our assumptions apply to distributions with fatter tails, such as Pareto distributions.

Identif\mbox{}ication only requires that the variable subject to the exclusion restriction can take on two values. If it can take on more values, the model is overidentif\mbox{}ied and the specif\mbox{}ication can be tested.

The tail conditions we use to obtain nonparametric identif\mbox{}ication are related to the well-known identif\mbox{}ication-at-inf\mbox{}inity argument of \cite{Heckman1990}; see also \cite{DhaultfoeuilleMaurel2013} for another approach. Other types of support restrictions have been used in related problems to establish identif\mbox{}ication. \cite{SchwarzVanBellegem2010} imposed support restrictions in a semiparametric deconvolution problem to deal with measurement error in location models. \cite{DhaultfoeuilleFevrier2014} relied on a support condition as an alternative to completeness conditions (\citealt{HuSchennach2008}) in multivariate mixture models.

The remainder of the paper is organized as follows. Section~\ref{sec:modelident} describes the mixture model and proves identif\mbox{}ication. We rely on these results to construct estimators and derive their asymptotic properties in Section~\ref{sec:estimators}. We also discuss specif\mbox{}ication testing at this point. In Section~\ref{sec:MC} we conduct a Monte Carlo experiment that gives evidence on the small-sample performance of our methods. Finally, we conclude with some remarks on mixtures with more than two components.

\section{Mixtures with exclusion and tail restrictions}
\label{sec:modelident}

Let $(Y,X)\in \mathbb{R}\times \mathcal{X}$ be random variables. We assume throughout that our mixtures satisfy the following simple exclusion restriction.\footnote{We omit conditioning variables throughout. The identif\mbox{}ication analysis extends straightforwardly. In principle, the distribution theory could be extended by using local empirical process results along the lines of \cite{EinmahlMason1997}. We postpone a detailed investigation into such an extension to future work.}
  
\begin{assumption}[Mixture with exclusion] \label{ass:model}
	$F(y\vert x)\equiv \mathbb{P}(Y\leq y\vert X=x)$ decomposes as
	the two-component mixture
	\begin{equation} \label{eq:model}
	F(y\vert x) = G(y) \, \lambda(x) + H(y) \, (1-\lambda(x))
	\end{equation}
	for distribution functions $G:\mathbb{R}\mapsto[0,1]$ and
	$H:\mathbb{R}\mapsto[0,1]$ and a function
	$\lambda:  \mathcal{X}\mapsto[0,1]$ that maps values $x$
	into mixing proportions.
\end{assumption}

\noindent
The assumption that the component distributions do not depend on $X$ embodies our exclusion restriction; see also \cite{HenryKitamuraSalanie2014}.

We complete the mixture model with the following assumption.
\begin{assumption} \label{ass:relevance}
The mixing proportion $\lambda$ is non-constant on $\mathcal{X}$ and is bounded away from zero and one on $\mathcal{X}$.
\end{assumption}

\noindent
Non-constancy of $\lambda$ gives the variable $X$ relevance. Bounding $\lambda$ away from zero and one implies that the mixture is irreducible.\footnote{Note that irreducibility rules out the possibility of achieving identif\mbox{}ication of $G$ and $H$ via an identif\mbox{}ication-at-inf\mbox{}inity argument, as in \cite{Heckman1990} and \cite{AndrewsSchafgans1998} for instance.}


\subsection{Motivating examples}

Our f\mbox{}irst example has a long history in empirical work (\citealt{Frisch1934}).

\begin{example}[Mismeasured treatments] \label{example:treatment}
Let $T$ denote a binary treatment indicator. Suppose that $T$ is
subject to classif\mbox{}ication error: rather than observing $T$,
we observe misclassif\mbox{}ied treatment $X$. The distribution of the
outcome variable $Y$ given $X=x$ is
\begin{equation*}
\begin{split}
F(y\vert x) =
\mathbb{P}(Y\leq y\vert T=1,X=x) \, \lambda(x)
+
\mathbb{P}(Y\leq y\vert T=0,X=x) \, (1-\lambda(x)),
\end{split}
\end{equation*}
with $\lambda(x)=\mathbb{P}(T=1\vert X=x)$. The usual ignorability assumption states that $X$ and $Y$ are independent given $T$. That is,
$$
\mathbb{P}(Y\leq y\vert T=t,X=x) = 
\mathbb{P}(Y\leq y\vert T=t),
$$ for $t\in\lbrace 0,1\rbrace$, in which case the decomposition of
$F(y\vert x)$ reduces to the model in \eqref{eq:model} with
$G(y)=\mathbb{P}(Y\leq y\vert T=1)$ and $H(y)= \mathbb{P}(Y\leq y\vert
T=0)$. Also note that $\lambda$ is non-constant unless
misclassif\mbox{}ication in $T$ is completely random.
\end{example}

\noindent
The identif\mbox{}ication of treatment ef\mbox{}fects when the treatment indicator is mismeasured has received considerable attention, especially in the context of regression models (\citealt{Bollinger1996}; \citealt{Mahajan2006}; \citealt{Lewbel2006}). Here, the conditional ignorability assumption that validates our exclusion restriction relies on non-dif\mbox{}ferential misclassif\mbox{}ication error. It has been routinely used elsewhere (\citealt{CarrollRuppertStefanskiCrainiceanu2006}).

Our second example deals with regime-switching models, also referred to as hidden Markov models. These models cover switching regressions, which have been used in a variety of settings (see, e.g., \citealt{Heckman1974}, \citealt{Hamilton1989}), as well as several versions of stochastic-volatility models
(\citealt{GhyselsHarveyRenault1996}).

\begin{example}[Hidden Markov model] \label{example:HMM}
Let $\bm{Y}=(Y_1,\ldots,Y_T)^\prime$
 be a time series of outcome variables. A
hidden Markov model for the dependency structure in these data assumes
that there is a discrete latent series of state variables
$\bm{S}=(S_1,\ldots,S_T)^\prime$ having Markovian dependence, that the
variables in $\bm{Y}$ are jointly independent given
$\bm{S}$, and that
$$
\mathbb{P}(Y_t\leq y_t \vert \bm{S} = \bm{s}) = 
\mathbb{P}(Y_t\leq y_t \vert S_t = s_t).
$$ To see that such a model f\mbox{}its \eqref{eq:model}, assume that there
are two latent states 0 and 1 and (for notational simplicity) that $\bm{S}$
has f\mbox{}irst-order Markov dependence. Denote 
 $\bm{X}=(Y_1,\ldots,Y_{t-1})^\prime$.  Then
$$
F(y_t\vert \bm{x})
=
\mathbb{P}(Y_t\leq y_t \vert S_t=1) \, \mathbb{P}(S_t=1\vert \bm{X}=\bm{x} )
+
\mathbb{P}(Y_t\leq y_t \vert S_t=0) \, \mathbb{P}(S_t=0\vert \bm{X}=\bm{x} ),
$$ which f\mbox{}its our setup. Moreover,
$\lambda(\bm{x})=\mathbb{P}(S_t=1\vert \bm{X}=\bm{x})$ does vary with
$\bm{x}$, unless the outcomes are independent of the latent states.

\end{example}

\noindent
In this example, the exclusion restriction follows directly from the Markovian structure of the regime-switching model. \cite{GassiatRousseau2014} obtained nonparametric identif\mbox{}ication in location models when the matrix of transition probabilities of the Markov chain has full rank. The approach presented here delivers nonparametric identif\mbox{}ication in a much broader range of models.

Our third example links \eqref{eq:model} to the recent literature on multivariate mixtures that builds on \cite{HallZhou2003}.

\begin{example}[Multivariate mixture] \label{example:HallZhou}
Suppose $Y$ and $X$ are two measurements that are independent conditional on a latent binary factor $T$:
\begin{equation*}
\begin{split}
\mathbb{P}(Y\leq y, X\leq x)
& =
\mathbb{P}(Y\leq y\vert T=1) 
\, 
\mathbb{P}(X\leq x \vert T=1) 
\, \mathbb{P}(T=1)
\\
& + \, 
\mathbb{P}(Y\leq y\vert T=0) 
\, 
\mathbb{P}(X\leq x \vert T=0) 
\, \mathbb{P}(T=0).
\end{split}
\end{equation*}
Then the conditional distribution of the $Y$ given $X$ is 
$$
F(y\vert x) = 
\mathbb{P}(Y\leq y\vert T=1) \, \mathbb{P}(T=1\vert X=x) +
\mathbb{P}(Y\leq y\vert T=0) \, \mathbb{P}(T=0\vert X=x) .
$$
This is of the form in \eqref{eq:model} with $G(y)=\mathbb{P}(Y\leq y\vert T=1)$, $H(y)=\mathbb{P}(Y\leq y\vert T=0)$, and $\lambda(x) = \mathbb{P}(T=1\vert X=x)$. Note that the bivariate mixture model implies that the distribution of $X$ given $Y$ decomposes in the same way.
\end{example}

\noindent
\cite{HallZhou2003} showed that multivariate two-component mixtures with conditional independence restrictions are nonparametrically identif\mbox{}ied from data on three or more measurements
and are set identif\mbox{}ied from data on only two measurements. The results we derive below imply that two measurements can also yield point identif\mbox{}ication under tail restrictions.

\subsection{Identification}

We show below that both the mixture components $G, H$ and the mixing proportions $\lambda$ are identif\mbox{}ied under the following dominance condition on the tails of the component distributions.

\begin{assumption}[Tail dominance] \label{ass:dominance} \mbox{} \\
(i) The left tail of $G$ is thinner than the left tail of $H$, i.e.,  
$$
\lim_{y\downarrow-\infty}\frac{G(y)}{H(y)} = 0.
$$  
(ii) The right tail of $G$ is thicker than the right tail of $H$, i.e., 
$$
\lim_{y\uparrow  +\infty}\frac{1-H(y)}{1-G(y)} = 0.
$$
\end{assumption}

\bigskip


Tail dominance is natural in location models.

\begin{example}[Location models] \label{lemma:locationmodel}
Suppose that $Y = \mu(T)+U$, where $T$ is a binary indicator and
 $U\sim F$, independent of $T$. Then \eqref{eq:model} yields
$$
F(y\vert x) = F(y-\mu(1)) \, \mathbb{P}(T=1\vert X=x) + F(y-\mu(0)) \, \mathbb{P}(T=0\vert X=x).
$$
Suppose that $\mu(0)<\mu(1)$, that $F$ is  absolutely continuous, and that its hazard rate  $f(u)/
\big(1-F(u)\big)$  (resp.\ $f(u)/ F(u)$) goes to $+\infty$ as $u\uparrow
+\infty$  (resp.\ $u\downarrow -\infty$). Then Assumption
\ref{ass:dominance} holds with $G(y)= F(y-\mu(1))$ and $H(y)=F(y-\mu(0))$.
\end{example}

\begin{proof}
Let us show that Assumption \ref{ass:dominance}(ii) holds.  Let $\varphi(u)\equiv -\ln (1-F(u))$ and note that
$\varphi^\prime(u)=f(u)/(1-F(u))$. Then
$$
\frac{1-F(y-\mu(0))}{1-F(y-\mu(1))} =  \exp \left( \varphi(y-\mu(1))-\varphi(y-\mu(0))\right)
=
\exp \left( - \varphi^\prime(y^\ast) \, (\mu(1)-\mu(0)) \right)
$$ for some $y^\ast$  between $y-\mu(1)$ and $y-\mu(0)$. Since $\mu(1)>\mu(0)$ and
the hazard rate increases without bound as $y\uparrow +\infty$, the expression on the right-hand side tends to zero
as $y$ increases. Assumption \ref{ass:dominance}(i) can be verif\mbox{}ied in
the same way.
\end{proof}

\noindent
It is important to note that, aside from regularity conditions, we do not impose any shape restrictions on the mixture components outside of the tails.

We now show that, combined, our exclusion restriction and tail-dominance assumption identif\mbox{}y all elements of the mixture model.

\begin{theorem}[Identification] \label{thm:identification}
Under Assumptions \ref{ass:model}---\ref{ass:dominance},  $G$, $H$, and $\lambda$ are identif\mbox{}ied.
\end{theorem}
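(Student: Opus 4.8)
The plan is to exploit relevance to collapse the problem to two conditioning values, and then to read the two mixing proportions off the relative tail behavior of the observed conditional distributions. First I would invoke Assumption \ref{ass:relevance}: since $\lambda$ is non-constant there exist $x_0,x_1\in\mathcal{X}$ with $\lambda(x_0)\neq\lambda(x_1)$. Evaluating \eqref{eq:model} at these two points yields a $2\times 2$ linear system in the unknown functions $G$ and $H$ whose coefficient determinant equals $\lambda(x_0)-\lambda(x_1)\neq 0$. Hence, if the two scalars $\lambda(x_0)$ and $\lambda(x_1)$ were known, $G$ and $H$ would be pinned down pointwise by Cramer's rule. The whole problem therefore reduces to identifying these two mixing proportions.

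The key step is to recover the proportions from the tails. For the right tail, \eqref{eq:model} gives $1-F(y\vert x)=(1-G(y))\,\lambda(x)+(1-H(y))\,(1-\lambda(x))$; dividing by $1-G(y)$ and letting $y\uparrow+\infty$, Assumption \ref{ass:dominance}(ii) forces $(1-H(y))/(1-G(y))\to 0$, so that
\[
\lim_{y\uparrow+\infty}\frac{1-F(y\vert x_1)}{1-F(y\vert x_0)}=\frac{\lambda(x_1)}{\lambda(x_0)}=:r_+ .
\]
Symmetrically, dividing the left-tail identity $F(y\vert x)=G(y)\,\lambda(x)+H(y)\,(1-\lambda(x))$ by $H(y)$ and applying Assumption \ref{ass:dominance}(i) gives
\[
\lim_{y\downarrow-\infty}\frac{F(y\vert x_1)}{F(y\vert x_0)}=\frac{1-\lambda(x_1)}{1-\lambda(x_0)}=:r_- .
\]
Both limits are ratios of observed quantities, so $r_+$ and $r_-$ are identified.

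These two equations are linear in $\lambda(x_0),\lambda(x_1)$ and solve uniquely as long as $r_+\neq r_-$; this inequality is automatic, since $r_+=r_-$ would force $\lambda(x_0)=\lambda(x_1)$, contradicting the choice of $x_0,x_1$. Solving yields $\lambda(x_0)=(1-r_-)/(r_+-r_-)$ and $\lambda(x_1)=r_+\,\lambda(x_0)$, which completes the identification of the two proportions and, through the inversion of the first step, of $G$ and $H$. Finally $G\neq H$ — otherwise the ratios in Assumption \ref{ass:dominance} could not vanish — so there is a $y^\ast$ with $G(y^\ast)\neq H(y^\ast)$, and for every $x\in\mathcal{X}$ the proportion is recovered as $\lambda(x)=\big(F(y^\ast\vert x)-H(y^\ast)\big)/\big(G(y^\ast)-H(y^\ast)\big)$, identifying $\lambda$ on all of $\mathcal{X}$.

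I expect the only delicate points to be bookkeeping rather than substance. One must check that the denominators $1-G(y)$ and $H(y)$, and hence $1-F(y\vert x_0)$ and $F(y\vert x_0)$, are eventually strictly positive along the relevant tail: this is exactly what boundedness of $\lambda$ away from zero and one in Assumption \ref{ass:relevance} secures, together with the fact that the ratios in Assumption \ref{ass:dominance} are well posed, so the dominating tail does not vanish. The conceptual crux — and the one genuinely new ingredient relative to the partial-identification argument of \cite{HenryKitamuraSalanie2014} — is the observation that the relative tail thickness of the components translates directly into the ratio of mixing weights; once $r_+$ and $r_-$ are in hand, everything else is elementary linear algebra.
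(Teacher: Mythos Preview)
Your proposal is correct and follows essentially the same route as the paper: both arguments pick two points with distinct mixing weights, read off the ratios $\lambda(x_1)/\lambda(x_0)$ and $(1-\lambda(x_1))/(1-\lambda(x_0))$ from the right- and left-tail limits of the observed conditional distributions, solve the resulting $2\times 2$ system for the weights, and then invert the linear mixture to recover $G$ and $H$. The only cosmetic differences are that the paper writes the solutions for $G$ and $H$ directly in terms of the tail ratios $\zeta^{\pm}$ (Equations~\eqref{eq:H}--\eqref{eq:G}), which it later reuses to build plug-in estimators, and that it identifies $\lambda$ on all of $\mathcal{X}$ by re-running the tail argument at each $x'$ rather than through your single evaluation point $y^\ast$.
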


\begin{proof}
The proof is constructive. Fix $x^\prime\in\mathcal{X}$ and choose
$x^{\prime\prime}\in\mathcal{X}$ so that $\lambda(x^\prime) \neq
\lambda(x^{\prime\prime})$. Then re-arranging \eqref{eq:model} gives 
\begin{equation*}
\begin{split}
\frac{F(y\vert x^\prime)}{F(y\vert x^{\prime\prime})}
& =
\frac{1+\lambda(x^\prime)  \left(G(y)/H(y)-1\right)}{1+\lambda(x^{\prime\prime})  \left(G(y)/H(y)-1\right)},
\\
\frac{1-F(y\vert x^\prime)}{1-F(y\vert x^{\prime\prime})}
& =
\frac{\lambda(x^{\prime}) +  \left((1-H(y))/(1-G(y))\right) \, \left(1-\lambda(x^{\prime})\right)}{\lambda(x^{\prime\prime}) +  \left((1-H(y))/(1-G(y))\right) \, \left(1-\lambda(x^{\prime\prime})\right)}.
\end{split}
\end{equation*}
Taking limits, Assumption \ref{ass:dominance} further implies that
\begin{equation} \label{eq:zetas}
\begin{split}
\zeta^{-}(x^\prime,x^{\prime\prime})  \equiv \lim_{y\downarrow -\infty} \frac{F(y\vert x^\prime)}{F(y\vert x^{\prime\prime})} = \frac{1-\lambda(x^\prime)}{1-\lambda(x^{\prime\prime})},
\\
\qquad
\zeta^{+}(x^\prime,x^{\prime\prime})  \equiv \lim_{y\uparrow +\infty} \frac{1-F(y\vert x^\prime)}{1-F(y\vert x^{\prime\prime})} = \frac{\lambda(x^\prime)}{\lambda(x^{\prime\prime})} .
\end{split}
\end{equation}
These two equations can be solved for the mixing proportion at $x^\prime$, yielding
\begin{equation} \label{eq:lambda}
\lambda(x^{\prime}) = \frac{1-\zeta^{-}(x^{\prime\prime},x^{\prime}) }{\zeta^{+}(x^{\prime\prime},x^{\prime}) -\zeta^{-}(x^{\prime\prime},x^{\prime}) }.
\end{equation}
Since $\lambda$ is non-constant, for any $x^\prime\in\mathcal{X}$ there exists a $x^{\prime\prime}\in\mathcal{X}$ for which such a system of equations can be constructed. The function $\lambda$ is therefore identif\mbox{}ied on its entire support.  To establish identif\mbox{}ication of $G$ and $H$, f\mbox{}irst note that
\begin{equation} \label{eq:diff}
G(y)-H(y) = \frac{F(y\vert x^{\prime\prime}) - F(y\vert x^{\prime})}{\lambda(x^{\prime\prime})-\lambda(x^\prime)}
\end{equation}
follows from \eqref{eq:model}. Then, evaluating \eqref{eq:model} in $x^{\prime\prime}$ and re-arranging the resulting expression for $F(y\vert x^{\prime\prime})$ gives
$$
H(y) 
= F(y\vert x^{\prime\prime}) -  \big(G(y)-H(y)\big) \, \lambda(x^{\prime\prime})
= F(y\vert x^{\prime\prime}) - \frac{\lambda(x^{\prime\prime})}{\lambda(x^{\prime\prime})-\lambda(x^{\prime})} \big(F(y\vert x^{\prime\prime})-F(y\vert x^{\prime})\big),
$$
which is identif\mbox{}ied. Furthermore, using \eqref{eq:zetas} we can write
\begin{equation} \label{eq:H}
H(y) = F(y\vert x^{\prime\prime}) - \frac{1}{1-\zeta^{+}(x^{\prime},x^{\prime\prime})} \big(F(y\vert x^{\prime\prime})-F(y\vert x^{\prime})\big).
\end{equation}
Plugging this expression for $H(y)$ back into the mixture representation of $F(y\vert x^{\prime\prime})$ as in \eqref{eq:model} further yields
\begin{equation} \label{eq:G}
G(y) = F(y\vert x^{\prime\prime}) - \frac{1}{1-\zeta^{-}(x^{\prime},x^{\prime\prime})} \big(F(y\vert x^{\prime\prime})-F(y\vert x^{\prime})\big),
\end{equation}
again using \eqref{eq:zetas}. This shows that  both component distributions are identif\mbox{}ied,  concluding the proof.
\end{proof}

If we only assume one-sided tail dominance, then either $G$ or $H$ remains identif\mbox{}ied.

\begin{corollary}[One-sided tail dominance] \label{cor:onesided}
Under Assumptions \ref{ass:model} and  \ref{ass:relevance}, $G$ is identif\mbox{}ied if Assumption \ref{ass:dominance}(i) holds and $H$ is identif\mbox{}ied if Assumption \ref{ass:dominance}(ii) holds.
\end{corollary}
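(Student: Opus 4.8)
The plan is to observe that the constructive identification in Theorem~\ref{thm:identification} already decouples the two components: the closed form \eqref{eq:G} for $G$ involves only the left-tail limit $\zeta^{-}$, while the closed form \eqref{eq:H} for $H$ involves only the right-tail limit $\zeta^{+}$. Since $\zeta^{-}$ is pinned down by Assumption~\ref{ass:dominance}(i) alone and $\zeta^{+}$ by Assumption~\ref{ass:dominance}(ii) alone, dropping one half of the tail-dominance condition costs us exactly one component and leaves the other identified.

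For part (i), I would first invoke Assumption~\ref{ass:relevance} to fix $x^{\prime},x^{\prime\prime}\in\mathcal{X}$ with $\lambda(x^{\prime})\neq\lambda(x^{\prime\prime})$. Dividing numerator and denominator of $F(y\vert x^{\prime})/F(y\vert x^{\prime\prime})$ by $H(y)$ and letting $y\downarrow-\infty$, Assumption~\ref{ass:dominance}(i) (i.e.\ $G(y)/H(y)\to 0$) yields the first limit in \eqref{eq:zetas}, namely $\zeta^{-}(x^{\prime},x^{\prime\prime})=(1-\lambda(x^{\prime}))/(1-\lambda(x^{\prime\prime}))$. This is a limit of observable conditional-CDF ratios, hence identified, and it does not rely on Assumption~\ref{ass:dominance}(ii) in any way. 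Non-constancy of $\lambda$ further guarantees $\zeta^{-}\neq 1$, so the denominator in \eqref{eq:G} is nonzero. Plugging the identified $\zeta^{-}$ into \eqref{eq:G} then expresses $G$ entirely in terms of identified objects, so $G$ is identified.

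Part (ii) is the mirror image. Under Assumption~\ref{ass:dominance}(ii) the ratio $(1-F(y\vert x^{\prime}))/(1-F(y\vert x^{\prime\prime}))$ converges as $y\uparrow+\infty$ to $\zeta^{+}(x^{\prime},x^{\prime\prime})=\lambda(x^{\prime})/\lambda(x^{\prime\prime})$, the second limit in \eqref{eq:zetas}, which is again identified and again nondegenerate whenever $\lambda(x^{\prime})\neq\lambda(x^{\prime\prime})$. Substituting this $\zeta^{+}$ into \eqref{eq:H} identifies $H$.

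There is no substantive obstacle here; the content of the corollary is simply the observation that \eqref{eq:G} and \eqref{eq:H} each load on a single tail. The only points requiring care are bookkeeping: verifying that the relevant one-sided limit genuinely exists and takes the asserted value using only the corresponding half of Assumption~\ref{ass:dominance}, and confirming that Assumption~\ref{ass:relevance} keeps the relevant $\zeta$ away from $1$ so that no division by zero occurs in \eqref{eq:G} or \eqref{eq:H}. Note that without the second tail condition one can no longer solve \eqref{eq:lambda} for $\lambda$, which is exactly why only one component---rather than the entire model---is recovered.
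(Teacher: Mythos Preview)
Your proposal is correct and follows essentially the same route as the paper: the paper's proof simply observes (for the $H$ case) that Assumption~\ref{ass:dominance}(ii) alone pins down $\zeta^{+}(x',x'')$, hence $1/(1-\zeta^{+}(x',x''))$, and then invokes \eqref{eq:H}. Your version is slightly more complete in that you treat both cases explicitly and note that $\lambda(x')\neq\lambda(x'')$ keeps the relevant $\zeta$ away from $1$, but the argument is the same.
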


\begin{proof}
We consider identif\mbox{}ication of $H$. Let $x^\prime,x^{\prime\prime}$ be as in the proof of Theorem \ref{thm:identification}. Under Assumption \ref{ass:dominance}(ii) we can still determine
$\zeta^{+}(x^\prime,x^{\prime\prime})=\lambda(x^{\prime})/\lambda(x^{\prime\prime})$, from which we can learn the ratio
$
{1}/({1-\zeta^{+}(x^{\prime\prime},x^\prime)}).
$
Together with \eqref{eq:H} this yields $H$. This concludes the proof of the corollary.
\end{proof}

The following example illustrates the usefulness of Corollary \ref{cor:onesided}.

\begin{example}[Stochastic volatility]
Consider a two-regime stochastic volatility model, which is a special case of Example \ref{example:HMM}. Assume that the outcome variable $Y$ has mean zero and conditional variance
$$
T \, \sigma_G^2 + (1-T) \, \sigma_H^2
$$ 
for positive constants $\sigma_G^2$ and $\sigma_H^2$. Suppose that $\sigma_G^2> \sigma_H^2$. Then $G$ is the distribution associated with a regime that is characterized by relatively higher volatility. In
this case, both tails of $G$ dominate those of $H$. Hence, in Assumption \ref{ass:dominance}, Condition (ii) holds but Condition (i) fails. Nevertheless, the distribution $H$ of the lower-volatility regime  remains identif\mbox{}ied.
\end{example}

Our identif\mbox{}ication result suggests plug-in estimators of the mixing proportions and the component distributions. 

The proof of Theorem \ref{thm:identification}, Equations
\eqref{eq:H}--\eqref{eq:G} in particular, further show that our mixture model yields overidentifying restrictions as soon as the instrument can take on more than two values. We turn to estimation
in the next section, where we also   construct a statistic for a specification test that  exploits the invariance of the formulae for $G$ and $H$ in Equations \eqref{eq:H}--\eqref{eq:G}   to the values $x^\prime,x^{\prime\prime}$.\footnote{The expression for $\lambda(x^\prime)$ in \eqref{eq:lambda} also holds for any $x^{\prime\prime}$. This invariance cannot fruitfully be exploited to test the tail restrictions of Assumption \ref{ass:dominance}, however, as the right-hand side expression in \eqref{eq:lambda} is independent of the value $x^{\prime\prime}$ even when Assumption \ref{ass:dominance} fails.}

\section{Estimation}
\label{sec:estimators}
To motivate the construction of our estimators, we f\mbox{}irst note that the structure of the model in \eqref{eq:model} continues to hold when we aggregate across $x$. Extending our notation to
$$
F(y\vert A) \equiv \mathbb{P}(Y\leq y \vert X\in A), \qquad
\lambda(A) \equiv \sum_{x\in A} \lambda(x) \ \mathbb{P}(X = x\vert X\in A),
$$
for any $A\subset \mathcal{X}$, we have 
\begin{equation} \label{eq:modelA}
F(y\vert A) = G(y) \, \lambda(A) + H(y) \, (1-\lambda(A)),
\end{equation}
which is of the same form as \eqref{eq:model}. Furthermore, the proof of Theorem \ref{thm:identification} continues to go through for \eqref{eq:modelA}; replacing $x^\prime$ with $A$ and $x^{\prime\prime}$ with $\mathcal{X}-A$ does not alter the argument. 

We will assume from now on that $X$ is discrete. As will become apparent, this only  entails a loss of generality for the estimation of the function $\lambda$, as our estimator will only yield a discretized approximation to it. Extending our results to continuous $X$ would complicate the exposition greatly and we feel that it would only distract from our main argument.

We will work under the following sampling condition.

\begin{assumption} \label{ass:sampling}
$(Y_1,X_1),\ldots,(Y_n,X_n)$ is a random sample on $(Y,X)$.
\end{assumption}

For each $A\subset\mathcal{X}$, let 
$$
F_n(y\vert A)\equiv n_A^{-1} \sum_{i=1}^n
\mathrm{1}\lbrace Y_i\leq y, X_i\in A \rbrace,
$$
where $n_A\equiv \sum_{i=1}^n \mathrm{1}\lbrace X_i \in A \rbrace$.

For each pair of disjoint subsets $A,B$ of $\mathcal{X}$ we can generalize \eqref{eq:zetas} to
\begin{equation} \label{eq:zetas2}
\begin{split}
\zeta^{-}(A,B)  \equiv \lim_{y\downarrow -\infty} \frac{F(y\vert A)}{F(y\vert B)} = \frac{1-\lambda(A)}{1-\lambda(B)},
\\
\qquad
\zeta^{+}(A,B)  \equiv \lim_{y\uparrow +\infty} \frac{1-F(y\vert A)}{1-F(y\vert B)} = \frac{\lambda(A)}{\lambda(B)}.
\end{split}
\end{equation}
For any subsample of size $m$ and integers $\iota_{m}$ and $\kappa_{m}$, let $\ell_{m}$ and $r_{m}$ denote the ($\iota_{m}+1$)th and ($m-\kappa_{m}$)th order statistics of $Y$ in this subsample. We estimate the quantities in \eqref{eq:zetas2} by
\begin{equation} \label{eq:xi}
\zeta_{n}^{-}(A,B) \equiv
\frac{F_n(\ell_{n_{B}}\vert A)}{F_n(\ell_{n_{B}}\vert B)},
\qquad
\zeta_{n}^{+}(A,B) \equiv
\frac{1-F_n(r_{n_{B}}\vert A)}{1-F_n(r_{n_{B}}\vert B)},
\end{equation}
respectively. In our asymptotic theory, we will choose $\iota_{n_B}$ and $\kappa_{n_B}$ so that $\ell_{n_{B}}\downarrow -\infty$ and $r_{n_{B}}\uparrow +\infty$ as $n\uparrow+\infty$ at an appropriate rate.

Estimators of both the mixing proportions and the component distributions follow readily along the lines of  the proof of Theorem~\ref{thm:identification}; see below. Since their asymptotic distribution will be driven by the large-sample behavior of the  estimators  of the quantities in \eqref{eq:xi}, we start by deriving the statistical properties of these estimators.

\subsection{Asymptotic theory for intermediate quantiles}

Throughout this section we f\mbox{}ix disjoint sets $A,B$ and consider the asymptotic behavior of the estimators in \eqref{eq:xi}.

Consistency only  requires the following rate conditions.

\begin{assumption}[Order statistics] \label{ass:orderstats1}
 $\iota_{n_{B}}/\sqrt{n_{B}\ln \ln n_{B}}\uparrow +\infty$ and $\kappa_{n_{B}}/\sqrt{n_{B}\ln \ln n_{B}}\uparrow +\infty$ as $n\uparrow+\infty$.
\end{assumption}

\begin{theorem} [Consistency] \label{thm:xi}
If Assumptions \ref{ass:model}--\ref{ass:orderstats1} hold, 
$$
\zeta_{n}^{-}(A,B) \overset{p}{\rightarrow} \zeta^{-}(A,B),
\qquad
\zeta_{n}^{+}(A,B)\overset{p}{\rightarrow}\zeta^{+}(A,B),
$$
as $n\uparrow+\infty$. 
\end{theorem}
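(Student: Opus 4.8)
The plan is to exploit a deterministic feature of the estimators in \eqref{eq:xi}. Because $\ell_{n_B}$ and $r_{n_B}$ are order statistics of the subsample $B$, a fixed number of $B$-observations fall on either side of them: $\ell_{n_B}$ is the $(\iota_{n_B}+1)$th smallest $Y$ among the $n_B$ observations with $X_i\in B$, so (ties having probability zero under continuous component distributions) exactly $\iota_{n_B}+1$ of them lie at or below $\ell_{n_B}$. Hence $F_n(\ell_{n_B}\vert B)=(\iota_{n_B}+1)/n_B$ identically, and symmetrically $1-F_n(r_{n_B}\vert B)=\kappa_{n_B}/n_B$. The denominators in \eqref{eq:xi} are therefore non-random once the order statistics are fixed, and the whole problem reduces to controlling the empirical tail frequencies $F_n(\ell_{n_B}\vert A)$ and $1-F_n(r_{n_B}\vert A)$ of the $A$-subsample, evaluated at a threshold determined by $B$.

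First I would record that $n_A\rightarrow\infty$ and $n_B\rightarrow\infty$ almost surely, since $\mathbb{P}(X\in A)$ and $\mathbb{P}(X\in B)$ are positive, and that conditionally on the realized $X$'s the $A$- and $B$-subsamples are i.i.d.\ draws from $F(\cdot\vert A)$ and $F(\cdot\vert B)$. I may then invoke the Chung--Smirnov law of the iterated logarithm on each subsample, giving
$$
\sup_y\lvert F_n(y\vert A)-F(y\vert A)\rvert = O\!\left(\sqrt{\tfrac{\ln\ln n_A}{n_A}}\right),\qquad \sup_y\lvert F_n(y\vert B)-F(y\vert B)\rvert = O\!\left(\sqrt{\tfrac{\ln\ln n_B}{n_B}}\right)
$$
almost surely. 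Since $\iota_{n_B}$ and $\kappa_{n_B}$ are intermediate sequences (chosen so that $\iota_{n_B}/n_B\rightarrow0$ and $\kappa_{n_B}/n_B\rightarrow0$), we have $F_n(\ell_{n_B}\vert B)\rightarrow0$; combined with the uniform bound this forces $F(\ell_{n_B}\vert B)\rightarrow0$ and hence $\ell_{n_B}\rightarrow-\infty$ (the left endpoint of the support being $-\infty$ under Assumption~\ref{ass:dominance}), and likewise $r_{n_B}\rightarrow+\infty$. Assumption~\ref{ass:dominance} then yields the convergence of the population ratios, $F(\ell_{n_B}\vert A)/F(\ell_{n_B}\vert B)\rightarrow\zeta^{-}(A,B)$ and $(1-F(r_{n_B}\vert A))/(1-F(r_{n_B}\vert B))\rightarrow\zeta^{+}(A,B)$.

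The core of the argument is the factorisation
$$
\zeta_n^{-}(A,B)=\frac{F_n(\ell_{n_B}\vert A)}{F(\ell_{n_B}\vert A)}\cdot\frac{F(\ell_{n_B}\vert A)}{F(\ell_{n_B}\vert B)}\cdot\frac{F(\ell_{n_B}\vert B)}{(\iota_{n_B}+1)/n_B},
$$
with its evident analogue for $\zeta_n^{+}(A,B)$. The middle factor tends to $\zeta^{-}(A,B)$ by the previous step. For the outer factors, applying the LIL at $y=\ell_{n_B}$ gives $F(\ell_{n_B}\vert B)=(\iota_{n_B}+1)/n_B+O(\sqrt{\ln\ln n_B/n_B})$, and since Assumption~\ref{ass:orderstats1} forces $\iota_{n_B}/\sqrt{n_B\ln\ln n_B}\rightarrow\infty$, the error is negligible relative to $\iota_{n_B}/n_B$, so the third factor tends to $1$ and $F(\ell_{n_B}\vert B)\sim\iota_{n_B}/n_B$. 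Moreover $\zeta^{-}(A,B)=(1-\lambda(A))/(1-\lambda(B))$ is bounded away from zero because $\lambda$ is bounded away from one by Assumption~\ref{ass:relevance}, so $F(\ell_{n_B}\vert A)\sim\zeta^{-}(A,B)\,\iota_{n_B}/n_B$ is of the same order $\iota_{n_B}/n_B$; dividing the uniform numerator error $O(\sqrt{\ln\ln n_A/n_A})$ by this quantity again gives a term of order $\sqrt{n_B\ln\ln n_B}/\iota_{n_B}\rightarrow0$ (using $n_A\asymp n_B$), so the first factor also tends to $1$. This delivers $\zeta_n^{-}(A,B)\overset{p}{\rightarrow}\zeta^{-}(A,B)$, and the right-tail statement follows by the symmetric computation with $\kappa_{n_B}$ in place of $\iota_{n_B}$.

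The step I expect to be most delicate is the control of the first factor $F_n(\ell_{n_B}\vert A)/F(\ell_{n_B}\vert A)\rightarrow1$: the threshold $\ell_{n_B}$ is random and drifts into the tail where $F(\ell_{n_B}\vert A)$ vanishes, so a pointwise variance bound does not suffice and one must dominate the evaluation error by the uniform LIL rate. The rate in Assumption~\ref{ass:orderstats1} is precisely what renders the $\sqrt{\ln\ln n/n}$ sampling error negligible against the tail probability of order $\iota_{n_B}/n_B$, while the positivity of $\zeta^{-}(A,B)$ guaranteed by Assumption~\ref{ass:relevance} prevents the $A$-tail from degenerating relative to the $B$-tail.
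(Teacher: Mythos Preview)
Your argument is correct and follows essentially the same strategy as the paper's: both control the sampling error via the law of the iterated logarithm for the empirical distribution function and handle the population-ratio term via Assumption~\ref{ass:dominance}, with the rate in Assumption~\ref{ass:orderstats1} rendering the $O(\sqrt{\ln\ln n/n})$ LIL error negligible against the tail mass $\iota_{n_B}/n_B$. The only cosmetic difference is that you write a multiplicative factorisation where the paper uses the additive split $\zeta_n^{+}-\zeta^{+}=(\zeta_n^{+}-\zeta^{\kappa_{n_B}})+(\zeta^{\kappa_{n_B}}-\zeta^{+})$; the underlying algebra and the use of $1-F_n(r_{n_B}\vert B)=\kappa_{n_B}/n_B$ are identical. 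One small caveat: to force $\ell_{n_B}\downarrow-\infty$ and $r_{n_B}\uparrow+\infty$ you invoke $\iota_{n_B}/n_B\to 0$ and $\kappa_{n_B}/n_B\to 0$, which is formally Assumption~\ref{ass:orderstats2} rather than one of the listed hypotheses here---but the paper's own proof makes the same tacit use of it when appealing to Assumption~\ref{ass:dominance}(ii).
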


\begin{proof}
We prove the theorem for $\zeta_n^+$; the proof for $\zeta_n^-$ follows in a similar fashion. Write
\begin{equation} \label{eq:debut1}
\zeta_{n}^{+}-\zeta^{+} =
(\zeta_{n}^{+}-\zeta^{\kappa_{n_B}}) + (\zeta^{\kappa_{n_B}}-\zeta^{+}),
\end{equation}
for $\zeta^{\kappa_{n_B}}\equiv (1-F(r_{n_B}\vert A))/(1-F(r_{n_B}\vert B))$. For the second right-hand side term in \eqref{eq:debut1} we have
\begin{equation*}
\begin{split}
\zeta^{\kappa_{n_B}}-\zeta^{+}
& =
\left(
\frac{\lambda(A) + \frac{1-H(r_{n_B})}{1-G(r_{n_B})} (1-\lambda(A))}{\lambda(B) + \frac{1-H(r_{n_B})}{1-G(r_{n_B})} (1-\lambda(B))}
-\frac{\lambda(A)}{\lambda(B)}\right)
 =
 \, O_p\left(\frac{1-H(r_{n_B})}{1-G(r_{n_B})}\right)
=
o_p(1),
\end{split}
\end{equation*}
by Assumptions \ref{ass:dominance}(ii) and \ref{ass:orderstats1}. To deal with the f\mbox{}irst right-hand side term in \eqref{eq:debut1}, recall that
$$
\zeta_{n}^{+}-\zeta^{\kappa_{n_B}}
=
\frac{1-F_n(r_{n_{B}}\vert A)}{1-F_n(r_{n_{B}}\vert B)}
-\frac{1-F(r_{n_{B}}\vert A)}{1-F(r_{n_{B}}\vert B)}.
$$
Letting $\mathbb{G}_{n}(y\vert S)\equiv \sqrt{n_S} \big(F_n(y\vert S)-F(y\vert S)\big)$ for any $S\subset\mathcal{X}$ we thus have that
\begin{align*}
\zeta_{n}^{+}-\zeta^{\kappa_{n_B}}
&=
\frac{(1-F(r_{n_B}\vert A))\mathbb{G}_n(r_{n_B}\vert B)/\sqrt{n_B}-(1-F(r_{n_B}\vert B))\mathbb{G}_n(r_{n_B}\vert A)/\sqrt{n_A}}{(1-F_n(r_{n_B}\vert B))(1-F(r_{n_B}\vert B))}
\\
&= \frac{\sqrt{n_B}}{\kappa_{n_B}}
\left(
\zeta^{\kappa_{n_B}} \mathbb{G}_{n}(r_{n_B}\vert B)- \sqrt{\frac{n_B}{n_A}}\mathbb{G}_{n}(r_{n_B}\vert A)
\right)\\
& = O_{a.s.}\left(\frac{\sqrt{n_B \, \ln \ln n_B}}{\kappa_{n_B}}\right),
\end{align*}
where the second equality uses $1-F_n(r_{n_B}\vert B)=\kappa_{n_B}/n_B$ and the last one follows by the law of the iterated logarithm for empirical processes. Thus, from Assumption \ref{ass:orderstats1} it follows that $\lvert\zeta_{n}^{+}-\zeta^{\kappa_{n_B}} \rvert = o_p(1)$. This completes the proof.
\end{proof}

Deriving the limit distribution requires some more care, and three more assumptions. We first  impose the following regularity condition on the component distributions.

\begin{assumption} \label{ass:continuity}
$G$ and $H$ are absolutely continuous on $\mathbb{R}$.
\end{assumption}

\noindent
This assumption is very weak. Note that, as we do not require the existence of moments of the component distributions, our results also apply to heavy-tailed distributions such as Cauchy and Pareto distributions.

We will complement Assumption \ref{ass:orderstats1} with an additional rate condition.

\begin{assumption}[Order statistics (cont'd.)] \label{ass:orderstats2} $\iota_{n_{B}}/n_{B}\downarrow 0$ and $\kappa_{n_{B}}/n_{B}\downarrow 0$ as $n\uparrow+\infty$.
\end{assumption}

\noindent
Where Assumption \ref{ass:orderstats1} required the order statistics to grow to ensure consistency, this assumption bounds this  growth rate  so that appropriately scaled versions of $\zeta_n^+$ and $\zeta_n^-$ have a limit distribution.

Finally, we will use an additional condition on the relative tails of the component distributions.

\begin{assumption}[Tail rates] \label{ass:tailrates} \mbox{} \\
(i) ${G(\ell_{n_{B}})}/{H(\ell_{n_{B}})} = o_p(1/\sqrt{\iota_{n_{B}}})$; and \\
(ii) $({1-H(r_{n_{B}})})/({1-G(r_{n_{B}})}) = o_p(1/\sqrt{\kappa_{n_{B}}})$.
\end{assumption}

\noindent
Assumption \ref{ass:tailrates} rules out distributions whose tails vanish too quickly and ensures that the limit distributions are free of asymptotic bias. We comment on Assumption \ref{ass:tailrates} after we derive the limit distributions of our estimators.

\medskip

Let $\rho_{A,B}\equiv\mathbb{P}(X\in B)/\mathbb{P}(X\in A)$. Note that $0<\rho_{A,B}<+\infty$ because of random sampling. Introduce
\begin{eqnarray*}
\sigma^2_{-}(A,B)
& \equiv & 
\zeta^{-}(A,B)^2 + \rho_{A,B} \, \zeta^{-}(A,B),
\\
\sigma^2_{+}(A,B)
& \equiv & 
\zeta^{+}(A,B)^2 + \rho_{A,B} \, \zeta^{+}(A,B),
\end{eqnarray*}
Theorem \ref{thm:xi} provides the asymptotic properties of the estimators in \eqref{eq:zetas2} and is the main building block for our subsequent results.

\begin{theorem} [Asymptotic normality] \label{thm:xi2}
If Assumptions \ref{ass:model}--\ref{ass:tailrates} hold, then as  $n\uparrow +\infty$,
\begin{eqnarray*}
\sqrt{\iota_{n_{B}}} 
\big(\zeta_{n}^{-}(A,B)-\zeta^{-}(A,B)\big)
& \overset{d}{\rightarrow} & \mathcal{N}(0,\sigma_{-}^2(A,B)),
\\
\sqrt{\kappa_{n_{B}}} 
\big(\zeta_{n}^{+}(A,B)-\zeta^{+}(A,B)\big)
& \overset{d}{\rightarrow} & \mathcal{N}(0,\sigma_{+}^2(A,B));
\end{eqnarray*}
and these two estimators are asymptotically independent.
\end{theorem}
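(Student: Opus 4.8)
The plan is to leverage the exact algebraic decomposition already derived in the proof of Theorem~\ref{thm:xi}. Writing $\zeta_n^+ - \zeta^+ = (\zeta_n^+ - \zeta^{\kappa_{n_B}}) + (\zeta^{\kappa_{n_B}} - \zeta^+)$ as before, I would first dispose of the deterministic piece: the consistency proof shows $\zeta^{\kappa_{n_B}} - \zeta^+ = O_p\big((1-H(r_{n_B}))/(1-G(r_{n_B}))\big)$, so Assumption~\ref{ass:tailrates}(ii) is exactly what guarantees $\sqrt{\kappa_{n_B}}(\zeta^{\kappa_{n_B}}-\zeta^+) = o_p(1)$. Thus the limit law is carried entirely by the stochastic piece, which the same proof expresses as
\[
\sqrt{\kappa_{n_B}}\,(\zeta_n^+ - \zeta^{\kappa_{n_B}}) = \sqrt{\tfrac{n_B}{\kappa_{n_B}}}\Big( \zeta^{\kappa_{n_B}}\,\mathbb{G}_n(r_{n_B}\vert B) - \sqrt{\tfrac{n_B}{n_A}}\,\mathbb{G}_n(r_{n_B}\vert A) \Big).
\]
The entire argument then reduces to the joint tail behaviour of the two empirical processes $\mathbb{G}_n(r_{n_B}\vert B)$ and $\mathbb{G}_n(r_{n_B}\vert A)$ at the drifting level $r_{n_B}$.

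For the first term I would use the identity $1 - F_n(r_{n_B}\vert B) = \kappa_{n_B}/n_B$, which holds by construction of $r_{n_B}$ as the $(n_B-\kappa_{n_B})$th order statistic. Writing $p_B \equiv 1-F(r_{n_B}\vert B)$ this gives $\mathbb{G}_n(r_{n_B}\vert B) = (\kappa_{n_B}/\sqrt{n_B})\,(n_B p_B/\kappa_{n_B}-1)$, so the first term equals $\sqrt{\kappa_{n_B}}\,\zeta^{\kappa_{n_B}}(n_B p_B/\kappa_{n_B}-1)$. By absolute continuity (Assumption~\ref{ass:continuity}) the probability integral transform makes $p_B$ a uniform order statistic, and the classical asymptotic normality of intermediate order statistics---valid because $\kappa_{n_B}\to\infty$ with $\kappa_{n_B}/n_B\to 0$ (Assumptions~\ref{ass:orderstats1}--\ref{ass:orderstats2})---yields $\sqrt{\kappa_{n_B}}(n_B p_B/\kappa_{n_B}-1)\overset{d}{\rightarrow}\mathcal{N}(0,1)$. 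Since $\zeta^{\kappa_{n_B}}\to\zeta^+$, this term converges to $\mathcal{N}(0,(\zeta^+)^2)$. For the second term I would condition on the $B$-subsample: because $A$ and $B$ are disjoint and the data are i.i.d., $r_{n_B}$ is independent of the $A$-subsample, so conditionally $\mathbb{G}_n(r_{n_B}\vert A)$ is a centered empirical fluctuation with variance $F(r_{n_B}\vert A)(1-F(r_{n_B}\vert A))$. Using $\zeta^{\kappa_{n_B}} = (1-F(r_{n_B}\vert A))/(1-F(r_{n_B}\vert B))\to\zeta^+$ and $p_B\sim\kappa_{n_B}/n_B$, the conditional variance of the scaled second term converges to $(n_B/n_A)\,\zeta^+\to\rho_{A,B}\,\zeta^+$; a Lindeberg central limit theorem for the triangular array of tail indicators (again using $\kappa_{n_B}\to\infty$) then gives a conditional $\mathcal{N}(0,\rho_{A,B}\zeta^+)$ limit.

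I would then combine the two terms. The first is measurable with respect to the $B$-subsample, whereas, conditionally on that subsample, the second is asymptotically $\mathcal{N}(0,\rho_{A,B}\zeta^+)$ with a limit that does not depend on the conditioning value; a characteristic-function argument (integrating out the conditioning by dominated convergence) then delivers joint convergence to independent Gaussians, so their sum is $\mathcal{N}(0,(\zeta^+)^2+\rho_{A,B}\zeta^+)=\mathcal{N}(0,\sigma_+^2(A,B))$. The statement for $\zeta_n^-$ follows verbatim after replacing $\kappa_{n_B}$, $r_{n_B}$ and Assumption~\ref{ass:tailrates}(ii) by $\iota_{n_B}$, $\ell_{n_B}$ and Assumption~\ref{ass:tailrates}(i), giving variance $\sigma_-^2(A,B)$. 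Finally, the asymptotic independence of $\zeta_n^-$ and $\zeta_n^+$ follows because the former is driven by the lower tail empirical process (observations below $\ell_{n_B}$) and the latter by the upper tail process (observations above $r_{n_B}$), and lower and upper intermediate order statistics---together with their associated tail empirical processes---are asymptotically independent.

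The main obstacle is the drifting, data-dependent evaluation point $r_{n_B}$: the ordinary functional central limit theorem does not apply at a level that approaches the boundary of the support, so pointwise normality of $\mathbb{G}_n(\cdot\vert B)$ and $\mathbb{G}_n(\cdot\vert A)$ there has to be imported from the theory of tail empirical processes rather than from Donsker's theorem. The delicate steps are verifying the Lindeberg condition for the conditional central limit theorem of the second term and checking that its conditional limiting variance is non-random, which is what allows the conditioning on the $B$-subsample to be discharged and the two contributions to be declared asymptotically independent.
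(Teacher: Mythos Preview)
Your proposal is correct and shares the paper's overall architecture: the same bias/stochastic split $\zeta_n^+-\zeta^+=(\zeta_n^+-\zeta^{\kappa_{n_B}})+(\zeta^{\kappa_{n_B}}-\zeta^+)$, the same use of Assumption~\ref{ass:tailrates} to make the second piece $o_p(1/\sqrt{\kappa_{n_B}})$, and the same observation that the $A$- and $B$-contributions are asymptotically independent because they are built from disjoint subsamples. The difference lies in the technical device used to obtain the Gaussian limit of each piece. The paper rewrites both $\mathbb{G}_n(r_{n_B}\vert B)$ and $\mathbb{G}_n(r_{n_B}\vert A)$, via the probability integral transform, as uniform empirical processes $\alpha_m$ evaluated at a drifting intermediate level $(2\kappa/m)\cdot(1-\varepsilon_n)/2$, and then invokes Theorem~2.1 of \cite{Einmahl1992} on tail empirical processes twice (once with $m=n_B$, once with $m=n_A$ and a reparametrized $\tilde\kappa_{n_A}$). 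You instead handle the $B$-term through the classical asymptotic normality of intermediate uniform order statistics (which is equivalent to Einmahl's result at a single point), and handle the $A$-term by conditioning on the $B$-subsample, applying a triangular-array Lindeberg CLT, and then discharging the conditioning through a characteristic-function argument once the conditional limiting variance is seen to be non-random. Your route is marginally more elementary in that it avoids citing a specialized tail-process theorem, at the price of making the conditioning step explicit; the paper's route is more compact because Einmahl's theorem packages the drifting-level CLT in a single citation and delivers independence of $Z_A^+$ and $Z_B^+$ directly from the disjointness of the subsamples.
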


\begin{proof}
We focus on the limit behavior of $\sqrt{\kappa_{n_B}}(\zeta_{n}^{+}-\zeta^{+})$ here; the proof of the result for $\sqrt{\iota_n}(\zeta_{n}^{-}-\zeta^{-})$ follows along similar lines.

As in the proof of Theorem \ref{thm:xi}, write
\begin{equation} \label{eq:debut}
\sqrt{\kappa_{n_B}}(\zeta_{n}^{+}-\zeta^{+}) =
\sqrt{\kappa_{n_B}}(\zeta_{n}^{+}-\zeta^{\kappa_{n_B}}) + \sqrt{\kappa_{n_B}}(\zeta^{\kappa_{n_B}}-\zeta^{+}),
\end{equation}
for $\zeta^{\kappa_{n_B}}\equiv (1-F(r_{n_B}\vert A))/(1-F(r_{n_B}\vert B))$.  Assumption \ref{ass:tailrates} implies that
\begin{equation*}
\begin{split}
\sqrt{\kappa_{n_B}}(\zeta^{\kappa_{n_B}}-\zeta^{+}) 
& =
\sqrt{\kappa_{n_B}} \, O_p\left(\frac{1-H(r_{n_B})}{1-G(r_{n_B})}\right) = o_p(1).
\end{split}
\end{equation*}
Hence, the second right-hand side term in \eqref{eq:debut} is asymptotically negligible. 

We now turn to the first term in \eqref{eq:debut}. From the proof of Theorem \ref{thm:xi} we have that
$$
\sqrt{\kappa_{n_B}}(\zeta_{n}^{+}-\zeta^{\kappa_{n_B}}) = 
\sqrt{\frac{n_{B}}{\kappa_{n_B}}}
\left(
\zeta^{\kappa_{n_B}} \mathbb{G}_{n}(r_{n_B}\vert B)- \sqrt{\frac{n_B}{n_A}}\mathbb{G}_{n}(r_{n_B}\vert A)
\right),
$$
where $\mathbb{G}_{n}(y\vert S)\equiv \sqrt{n_S} \big(F_n(y\vert S)-F(y\vert S)\big)$ for any $S\subset\mathcal{X}$. Let $\alpha_n(u)\equiv \sqrt{n}\big(\mathcal{U}_n(u)-u\big)$ for $\mathcal{U}_n$ the empirical cumulative distribution of an i.i.d.~sample of size $n$ from a uniform distribution on $[0,1]$. By Assumption \ref{ass:continuity},  $F(y\vert S)$ is continuous in $y$ for all $S\subset\mathcal{X}$. Therefore
$$
\mathbb{G}_{n}(y\vert A) = \alpha_{n_A}\big(1-F(y\vert A)\big)
\; \mbox{and } \;
\mathbb{G}_{n}(y\vert B) = \alpha_{n_B}\big(1-F(y\vert B)\big)
$$
by an application of the probability integral transform. Hence, we may write
\begin{equation} \label{eq:decomp}
\begin{split}
\sqrt{\kappa_{n_B}}(\zeta_{n}^{+}-\zeta^{\kappa_{n_B}})
& =
\zeta^{\kappa_{n_B}} \, \sqrt{\frac{n_{B}}{\kappa_{n_B}}}  \,
\alpha_{n_{B}}\big(1-F(r_{n_B}\vert B)\big)
\\ & -
\sqrt{\frac{n_{B}}{\kappa_{n_B}}} \sqrt{\frac{n_B}{n_A}}\,
\alpha_{n_{A}}\big(1-F(r_{n_B}\vert A)\big).
\end{split}
\end{equation}
We study the asymptotic behavior of each of the right-hand side terms in turn.

Start with the f\mbox{}irst  right-hand side term in \eqref{eq:decomp}. From the definition of the order statistic $r_{n_B}$,  we find by adding and subtracting $F_n(r_{n_B}\vert B)$ that
\[
1-F(r_{n_B}\vert B)=\frac{\kappa_{n_B}}{n_{B}}\left(1+ \frac{\sqrt{n_{B}}}{\kappa_{n_B}} \mathbb{G}_{n}(r_{n_B}\vert B)\right);
\]
or, defining $\varepsilon_n \equiv - \sqrt{n_{B}}/\kappa_{n_B} \, \mathbb{G}_{n}(r_{n_B}\vert B)$,
\[
1-F(r_{n_B}\vert B)=\frac{\kappa_{n_B}}{n_{B}}(1-\varepsilon_n).
\]
Therefore we can write
\begin{equation} \label{eq:firstterm}
\zeta^{\kappa_{n_B}} \, \sqrt{\frac{n_{B}}{\kappa_{n_B}}}  \,
\alpha_{n_{B}}\big(1-F(r_{n_B}\vert B)\big)
=
\sqrt{2}\, \zeta^{\kappa_{n_B}} \, \sqrt{\frac{n_{B}}{2\kappa_{n_B}}}   \,
\alpha_{n_{B}} \left( \frac{2\kappa_{n_B}}{n_{B}}\frac{1 - \varepsilon_n}{2}\right).
\end{equation}
By the law of the iterated logarithm together with Assumption~\ref{ass:orderstats1},
$$
\varepsilon_n = - \frac{\sqrt{n_B}}{\kappa_{n_B}} \, O_{a.s.}\left(\sqrt{\ln \ln n_B}\right)
=
O_{a.s.}\left(\frac{\sqrt{n_B \ln \ln n_B}}{\kappa_{n_B}}\right) = o_{a.s.}(1).
$$
Hence $(1-\varepsilon_n)/2$ converges almost surely to $1/2$; and $(1-\varepsilon_n)/2\in(0,1)$ for $n$ large enough. We may then apply Theorem 2.1 in \cite{Einmahl1992} to establish the convergence in distribution of 
$\sqrt{\frac{n_{B}}{2\kappa_{n_B}}}   \,
\alpha_{n_{B}} \left( \frac{2\kappa_{n_B}}{n_{B}}\frac{1 - \varepsilon_n}{2}\right)$ to a normal random variable with mean zero and variance $1/2$. This, together with Equation \eqref{eq:firstterm} and an application of Slutsky's theorem,  implies that
\begin{equation} \label{eq:term1}
\sqrt{\frac{n_{B}}{\kappa_{n_B}}} \zeta^{\kappa_{n_B}}  \,
\alpha_{n_{B}}\big(1-F(r_{n_B}\vert B)\big)
 \overset{d}{\rightarrow}
 \zeta^{+} Z^{+}_B,
\end{equation}
where $Z^{+}_B$ is a standard normal random variable.

Now turn to the second right-hand side term in \eqref{eq:decomp}. First observe that
$$
1-F(r_{n_B}\vert A)=\zeta^{\kappa_{n_B}}\big(1-F(r_{n_B}\vert B)\big)=
\zeta^{\kappa_{n_B}}\frac{\kappa_{n_B}}{n_B}(1-\varepsilon_n).
$$
Using  $\rho_{A,B}=\lim_{n\uparrow+\infty} n_B/n_A,$ this gives
\begin{equation*} 
\sqrt{\frac{n_{B}}{\kappa_{n_B}}} \sqrt{\frac{n_B}{n_A}}\,
\alpha_{n_{A}}\big(1-F(r_{n_B}\vert A)\big)
=
\sqrt{2\rho_{A,B} \, \zeta^{+}} \, \sqrt{\frac{n_A}{2\tilde\kappa_{n_A}} }\,
\alpha_{n_A}\left(
\frac{2\tilde\kappa_{n_A}}{n_A} \frac{1  - \varepsilon_n}{2}
\right)
+ o_p(1),
\end{equation*}
where $\tilde\kappa_{n_A}\equiv(\kappa_{n_B}\zeta^{\kappa_{n_B}})/(n_B/n_A)$. As $\tilde\kappa_{n_A}$  satisf\mbox{}ies Assumption~(1.5) of Theorem 2.1 in \cite{Einmahl1992} we may apply his theorem again to obtain
\begin{equation} \label{eq:term2}
\sqrt{\frac{n_{B}}{\kappa_{n_B}}} \sqrt{\frac{n_B}{n_A}}\,
\alpha_{n_{A}}\big(1-F(r_{n_B}\vert A)\big)
\overset{d}{\rightarrow} \sqrt{\rho_{A,B}\, \zeta^{+}}\, Z^{+}_A,
\end{equation}
where $Z^{+}_A$ is a standard-normal random variable which, because of random sampling, is independent of $Z^{+}_B$.

Combining \eqref{eq:decomp} with \eqref{eq:term1} and \eqref{eq:term2} then gives
$$
\sqrt{\kappa_{n_B}}(\zeta_{n}^{+}-\zeta^{\kappa_{n_B}}) \overset{d}{\rightarrow} 
\zeta^{+} Z^{+}_B - \sqrt{\rho_{A,B} \, \zeta^{+}} \, Z^{+}_A, 
$$
as claimed. This concludes the proof.
\end{proof}

We finish this section with two examples that specialize Assumption \ref{ass:tailrates} to densities with log-concave tails and Pareto tails, respectively. In both cases, Assumption~\ref{ass:tailrates} is implied by the rate conditions in Assumption \ref{ass:orderstats2}.

\begin{example}[Log-concave tails] \label{example:concavetails}
Suppose that $G$ and $H$ have log-concave tails; and for notational simplicity, assume that 

$$
-\ln\left(1-G(y)\right) \sim \left( \frac{y}{\sigma_G^+}\right)^{\alpha^+_G}, \quad
-\ln\left(1-H(y)\right) \sim \left( \frac{y}{\sigma_H^+}\right)^{\alpha^+_H},  \quad \text{as } y\uparrow +\infty,
$$
for real numbers $\alpha^+_G,\alpha^+_H>1$ and $\sigma^+_G,\sigma^+_H>0$, and  
$$
-\ln G(y) \sim \left( \frac{-y}{\sigma_G^-}\right)^{\alpha^-_G}, \quad
-\ln H(y) \sim \left( \frac{-y}{\sigma_H^-}\right)^{\alpha^-_H},  \quad \text{as } y\downarrow -\infty,
$$
for real numbers $\alpha^-_G,\alpha^-_H>1$ and $\sigma^-_G,\sigma^-_H>0$. Then Assumption~\ref{ass:orderstats2} implies  Assumption~\ref{ass:tailrates}  if both

\vspace{.25cm}\noindent (i)
$\alpha_G^+ < \alpha_H^+$,  \emph{or\/} $\alpha_G^+ = \alpha_H^+$ and
  $\sigma^+_G>\sigma^+_H$; and
  
\vspace{.25cm}\noindent (ii)
$\alpha_G^- > \alpha_H^-$, \emph{or\/} $\alpha_G^- = \alpha_H^-$ and $\sigma^-_G<\sigma^-_H$

\vspace{.25cm} \noindent
hold.
\end{example} 

\begin{proof} 
We verify the second rate; the f\mbox{}irst follows similarly. Throughout, f\mbox{}ix the set $B$. Assumptions \ref{ass:dominance}(ii) and \ref{ass:orderstats2} imply that
\begin{equation*}
\begin{split}
1-F(r_{n_B}\vert B) 
& = 
\left(1-G(r_{n_B})\right) \, \lambda(B) + \left(1-H(r_{n_B})\right) \, \left(1-\lambda(B)\right) 
\\
& =
\left(1-G(r_{n_B})\right) \, \left(\lambda(B) + o_p(1)\right).
\end{split}
\end{equation*}
Further, because $\kappa_{n_B}/n_B=1-F_n(r_{n_B}\vert B)$, adding and subtracting $F(r_{n_B}\vert B)$ gives
\begin{equation*}
\begin{split}
\frac{\kappa_{n_B}}{n_B} 
& = 
\left(1-F(r_{n_B}\vert B)\right) + \left(F_n(r_{n_B}\vert B)-F(r_{n_B}\vert B)\right)
\\
& =
\left(1-F(r_{n_B}\vert B)\right) + O_{a.s.} (\sqrt{(\ln \ln n_B)/n_B}).
\end{split}
\end{equation*}
Because $(\ln \ln n_B)/n_B\rightarrow 0$, put together, we f\mbox{}ind
$$
\frac{\kappa_{n_B}}{n_B}  = C\, \left(1-G(r_{n_B})\right) \, (1+ o_p(1)) 
$$
for some constant $C$. Since $G$ and $H$ have log-concave tails, it  follows from this expression that $r_{n_B}$ behaves asymptotically like $\sqrt[^{\alpha_G^+}]{\ln n_B}$. And since
$$
\frac{1-H( r_{n_B})}{1-G( r_{n_B})}  \sim \exp \left\lbrace  \left(\frac{r_{n_B}}{\sigma_G^+}\right)^{\alpha_G^+}-  \left(\frac{r_{n_B}}{\sigma_H^+}\right)^{\alpha_H^+} \right\rbrace,
$$
we have that
\begin{equation*}
\begin{split}
\frac{1-H( r_{n_B})}{1-G( r_{n_B})}  =
\left\lbrace\begin{array}{ll}
O_p\left(\exp (-  (\ln n_B)^{\alpha_H^+/\alpha_G^+})\right) 
& \text{if } \alpha_H^+ > \alpha_G^+ \\
O_p\left(1/n_B\right) & \text{if } \alpha_H^+ = \alpha_G^+ \text{ and } \sigma_H^+ < \sigma_G^+ 
\end{array}\right.   ,
\end{split}
\end{equation*}
from which the conclusion follows.
\end{proof}

\noindent
Example \ref{example:concavetails} does not cover  location models with log-concave distributions in the  case when  the $\alpha$ and $\sigma$ parameters of $H$ equal those of $G$. This includes the location model with Gaussian errors, for which  $\alpha=2$ and $\sigma$ is the common standard error. While our estimator remains consistent in such cases, we do not know of general results on tail empirical processes that would yield the asymptotic distribution of the estimator in this knife-edge case. To assess the extent to which the failure of Assumption \ref{ass:tailrates} may play a role for inference, our simulation experiments in Section~\ref{sec:MC} include a Gaussian location model.

\begin{example}[Pareto tails] \label{example:paretotails}
Let $C$ denote a generic constant. Suppose that $G$ and $H$ have Pareto tails, i.e., 
$$
\left(1-G(y)\right) \sim C \, y^{-\alpha_G^+}, \quad
\left(1-H(y)\right) \sim C \, y^{-\alpha_H^+},  \quad \text{as } y\uparrow +\infty,
$$
for positive real numbers $\alpha^+_H> \alpha^+_G$ and
$$
G(y) \sim C \, (-y)^{-\alpha_G^-}, \quad
H(y) \sim C \, (-y)^{-\alpha_H^-},  \quad \text{as } y\downarrow -\infty,
$$
for positive real numbers $\alpha^-_G< \alpha^-_H$. 
Then Assumption \ref{ass:orderstats2} implies Assumption \ref{ass:tailrates}.
\end{example}

\begin{proof} 
The argument is very similar to the one that was used to verify Example \ref{example:concavetails}.
We focus on  the right tail; the argument for the left tail is similar. We have
$$
\frac{\kappa_{n_B}}{n_B} = \left(1- G( r_{n_B}) \right) \left( 1+o_p(1)\right) = C \, r^{-\alpha_G^+} \, \left(1+o_p(1)\right).
$$
Assumption \ref{ass:tailrates} requires that $(1-H( r_{n_B}))/(1-G( r_{n_B}))=o(1/\sqrt{\kappa_n})$, that is, that $r_{n_B}^{\alpha_G^+-\alpha_H^+} = o_p(1/\sqrt{\kappa_{n_B}})$. This rate condition is satisf\mbox{}ied when
$$
\left(\frac{{n_B}}{\kappa_{n_B}}\right)^{\frac{\alpha_G^+-\alpha_H^+}{\alpha_G^+}} = 
o_p\left(\frac{1}{\sqrt{\kappa_{n_B}}}\right),
$$
which can be achieved by setting $\kappa_{n_B} = o(n_B^{\gamma^+})$ for 
\begin{equation} \label{eq:paretocoeff}
\gamma^+\equiv 
\frac{\alpha^+_H-\alpha_G^+}{\alpha^+_H-\alpha_G^+/2}.
\end{equation}
This condition is weaker than Assumption \ref{ass:orderstats2} and is therefore implied by it.
\end{proof}

\noindent
Example \ref{example:paretotails} shows that our methods are well suited to deal with Pareto tails.
Pareto tails show up in many economic applications. A time-honored example is  income and wealth distributions (\citealt{AtkinsonPikettySaez2011}), which are often modeled as a log-normal for most quantiles, combined with a Pareto right tail. More generally, ``power laws'' have become a popular tool in finance, in studies of firm growth, and in urban economics (see \citealt{Gabaix2009} for a recent survey, and \citealt{AcemogluEtAl2012} for an application to business cycles.)
Many recent models of monopolistic competition, as used in international trade for instance, also assume that  productivities are Pareto-distributed (\citealt{ArkolakisCostinotRodriguezClare2012}).

Let us focus on the right tail condition. Identif\mbox{}ication  only requires that the tail index of $H$ be larger than that of  $G$, that is, $\alpha_H^+ >\alpha_G^+$. Let  $c^+\equiv\alpha_H^+/\alpha_G^+>1$. Equation \eqref{eq:paretocoeff} then gives a convergence rate arbitrarily close to $n^{-\beta^+/2}$ for $\beta^+ = 2(c^+-1)/(2c^+-1)$. For example, if $c^+=2$ then $\beta^+=2/3$ and our estimators will converge slightly slower than $n^{-1/3}$. However, as $c^+$ increases, $\beta^+$ becomes closer to one and our estimators will converge at close to the $n^{-1/2}$ parametric rate.

\subsection{Mixing proportions}
Fix $x\in\mathcal{X}$ and consider estimating $\lambda(x)$. Set $A=\mathcal{X}- x$ and $B= x$ in \eqref{eq:zetas2} and solve for $\lambda(x)$ to get 
$$
\lambda(x) = \frac{1-\zeta^{-}(A,x)}{\zeta^{+}(A,x)-\zeta^{-}(A,x)}.
$$
The mixing proportion $\lambda$ need not be a strictly monotonic function. Estimating $\lambda(x)$ by an average of plug-in estimates of \eqref{eq:lambda} could therefore be problematic, as the denominator in \eqref{eq:lambda} can be zero or be arbitrarily close to it for some pairs of values $(x^\prime,x^{\prime\prime})$.

We instead estimate the mixing proportion at $X=x$ by  a plug-in estimator based on \eqref{eq:xi}, that is,
$$
\lambda_n(x) \equiv \frac{1-\zeta_n^{-}(A,x)}{\zeta_n^{+}(A,x)-\zeta_n^{-}(A,x)}.
$$
This estimator uses observations with $X_i \neq x$ in a way that immunizes it against small or zero denominators.

To present the asymptotic variance of this estimator we need to def\mbox{}ine
\begin{equation} \label{eq:jacobian}
\begin{split}
d^{-}(x) 
& 
\equiv 
\frac{1-\zeta^{+}(A,x)}{( \zeta^{+}(A,x)-\zeta^{-}(A,x) )^2} ,
\\
d^{+}(x) 
&  
\equiv 
\frac{\zeta^{-}(A,x)-1}{( \zeta^{+}(A,x)-\zeta^{-}(A,x) )^2}.
\end{split}
\end{equation}
The speed of convergence and the asymptotic distribution of the $\lambda_n(x)$ depend on the ratio $c_x\equiv \lim_{n\uparrow +\infty} {\iota_{n_x}/\kappa_{n_x}}$.

\begin{theorem}[Mixing proportions] \label{thm:proportions}
Under the conditions of Theorem \ref{thm:xi}, 
$$
\lvert \lambda_n(x) - \lambda(x) \rvert = o_p(1)
$$
as $n\uparrow+\infty$. 

\noindent
Under the conditions of Theorem \ref{thm:xi2},
\begin{eqnarray*}
&\sqrt{\iota_{n_x}} \big(\lambda_n(x)-\lambda(x)\big) 
 \overset{d}{\rightarrow}  
\mathcal{N}\big(0 \, , \, d^{-}(x)^2 \sigma_{-}^2(A,x) + c_x \, d^{+}(x)^2 \sigma_{+}^2(A,x)\big) 
& \text{if } c_x <+\infty,
\\
&\sqrt{\kappa_{n_x}} \big(\lambda_n(x)-\lambda(x)\big) 
 \overset{d}{\rightarrow} 
\mathcal{N}\big(0 \, , \, c_x^{-1} d^{-}(x)^2 \sigma_{-}^2(A,x) +  d^{+}(x)^2 \sigma_{+}^2(A,x)\big)
& \text{if } c_x > 0,
\end{eqnarray*}
as $n\uparrow+\infty$.
\end{theorem}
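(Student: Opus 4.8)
The plan is to recognize $\lambda_n(x)$ as a smooth transformation of the pair $\big(\zeta_n^-(A,x),\zeta_n^+(A,x)\big)$ and to propagate Theorems~\ref{thm:xi} and \ref{thm:xi2} through it by a delta-method argument. Writing $A=\mathcal{X}-x$ and $g(a,b)\equiv(1-a)/(b-a)$, we have $\lambda_n(x)=g\big(\zeta_n^-(A,x),\zeta_n^+(A,x)\big)$ and $\lambda(x)=g\big(\zeta^-(A,x),\zeta^+(A,x)\big)$. The map $g$ is well defined and continuously differentiable in a neighborhood of the true value provided the denominator $\zeta^+(A,x)-\zeta^-(A,x)$ is bounded away from zero; using \eqref{eq:zetas2} this difference equals $\big(\lambda(A)-\lambda(x)\big)/\big(\lambda(x)(1-\lambda(x))\big)$, which is nonzero by the non-constancy of $\lambda$ and stays away from zero because $\lambda$ is bounded away from $0$ and $1$ (Assumption~\ref{ass:relevance}).

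First I would establish consistency. By Theorem~\ref{thm:xi}, $\zeta_n^-(A,x)\overset{p}{\rightarrow}\zeta^-(A,x)$ and $\zeta_n^+(A,x)\overset{p}{\rightarrow}\zeta^+(A,x)$; since $g$ is continuous at the limit point, the continuous-mapping theorem yields $\lambda_n(x)\overset{p}{\rightarrow}\lambda(x)$, which is the first claim. For the limit distributions I would take a first-order expansion of $g$ about $\big(\zeta^-(A,x),\zeta^+(A,x)\big)$,
\[
\lambda_n(x)-\lambda(x)=d^-(x)\,\big(\zeta_n^-(A,x)-\zeta^-(A,x)\big)+d^+(x)\,\big(\zeta_n^+(A,x)-\zeta^+(A,x)\big)+R_n,
\]
where a direct computation of the partial derivatives of $g$ gives $\partial_a g=(1-b)/(b-a)^2=d^-(x)$ and $\partial_b g=(a-1)/(b-a)^2=d^+(x)$, exactly the quantities in \eqref{eq:jacobian}, and $R_n$ collects the second-order remainder. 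By Theorem~\ref{thm:xi2} the two leading terms converge at different rates, $\sqrt{\iota_{n_x}}\big(\zeta_n^--\zeta^-\big)\overset{d}{\rightarrow}\mathcal{N}(0,\sigma_-^2(A,x))$ and $\sqrt{\kappa_{n_x}}\big(\zeta_n^+-\zeta^+\big)\overset{d}{\rightarrow}\mathcal{N}(0,\sigma_+^2(A,x))$, with the two limits independent.

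To obtain the stated results I would normalize by the slower of the two rates and route the faster term through $c_x=\lim_{n\uparrow+\infty}\iota_{n_x}/\kappa_{n_x}$. When $c_x<+\infty$, multiply by $\sqrt{\iota_{n_x}}$ and write the second leading term as $d^+(x)\sqrt{\iota_{n_x}/\kappa_{n_x}}\cdot\sqrt{\kappa_{n_x}}\big(\zeta_n^+-\zeta^+\big)$; since $\sqrt{\iota_{n_x}/\kappa_{n_x}}\to\sqrt{c_x}$, Slutsky's theorem gives this piece a $\mathcal{N}\big(0,c_x\,d^+(x)^2\sigma_+^2(A,x)\big)$ limit, while the first term tends to $\mathcal{N}\big(0,d^-(x)^2\sigma_-^2(A,x)\big)$, and the asymptotic independence from Theorem~\ref{thm:xi2} makes the sum $\mathcal{N}\big(0,d^-(x)^2\sigma_-^2(A,x)+c_x\,d^+(x)^2\sigma_+^2(A,x)\big)$. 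The case $c_x>0$ is symmetric: normalize by $\sqrt{\kappa_{n_x}}$ and route the first term through $\sqrt{\kappa_{n_x}/\iota_{n_x}}\to 1/\sqrt{c_x}$.

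The one point demanding care is that the remainder is negligible under each normalization, i.e.\ $\sqrt{\iota_{n_x}}\,R_n=o_p(1)$ (resp.\ $\sqrt{\kappa_{n_x}}\,R_n=o_p(1)$). Since $R_n$ is quadratic in the two estimation errors, which are $O_p(1/\sqrt{\iota_{n_x}})$ and $O_p(1/\sqrt{\kappa_{n_x}})$, each squared or cross term multiplied by $\sqrt{\iota_{n_x}}$ is of order $1/\sqrt{\iota_{n_x}}$, $\sqrt{c_x}/\sqrt{\kappa_{n_x}}$, or $1/\sqrt{\kappa_{n_x}}$, all $o_p(1)$ because $\iota_{n_x},\kappa_{n_x}\uparrow+\infty$ and $c_x<+\infty$; the analogous accounting holds under the $\sqrt{\kappa_{n_x}}$ normalization when $c_x>0$. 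This heterogeneous-rate delta method—choosing the correct normalization in each regime and verifying that the quadratic remainder vanishes under it—is the only real obstacle; everything else is a routine consequence of the results already established.
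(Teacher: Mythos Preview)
Your proposal is correct and follows essentially the same route as the paper: consistency via the continuous-mapping theorem applied to $g(a,b)=(1-a)/(b-a)$, and the limit distribution via a first-order expansion (the delta method) combined with Theorem~\ref{thm:xi2} and Slutsky's theorem under the appropriate normalization in each regime. If anything, you are slightly more explicit than the paper in verifying that the denominator $\zeta^+(A,x)-\zeta^-(A,x)$ is bounded away from zero, in computing the partial derivatives that match $d^-(x)$ and $d^+(x)$, and in bounding the quadratic remainder $R_n$ under each normalization.
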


\begin{proof}
The consistency claim follows directly from Theorem \ref{thm:xi} by an application of the continuous mapping theorem.

To establish the asymptotic distribution, note that Theorem~\ref{thm:xi2} states that
\begin{eqnarray*}
\sqrt{\iota_{n_x}}  (\zeta_n^-(A,x)-\zeta^-(A,x)) & \overset{d}{\rightarrow} & \mathcal{N}(0,\sigma_{-}^2(A,x)), \\
\sqrt{\kappa_{n_x}} (\zeta_n^+(A,x)-\zeta^+(A,x)) & \overset{d}{\rightarrow} & \mathcal{N}(0,\sigma_{+}^2(A,x)),
\end{eqnarray*}
and that $\zeta_n^-(x)$ and $\zeta_n^+(x)$ are asymptotically independent. An expansion around $\zeta^-(A,x)$ and $\zeta^+(A,x)$ then yields
\begin{equation*}
\begin{split}
\sqrt{\iota_{n_x}} (\lambda_n(x) - \lambda(x)) 
& = d^-(x) \, \sqrt{\iota_{n_x}}  (\zeta_n^-(A,x)-\zeta^-(A,x))
\\  
& + d^+(x) \, \sqrt{\kappa_{n_x}} (\zeta_n^+(A,x)-\zeta^+(A,x)) \, \sqrt{\frac{\iota_{n_x}}{\kappa_{n_x}}} + o_p(1),
\end{split}
\end{equation*}
which has the limit distribution stated in the theorem if $c_x$ is f\mbox{}inite. Also, by the same argument, 
\begin{equation*}
\begin{split}
\sqrt{\kappa_{n_x}} (\lambda_n(x) - \lambda(x)) 
& = d^+(x) \, \sqrt{\kappa_{n_x}}  (\zeta_n^+(x)-\zeta^+(x))
\\  
& + d^-(x) \, \sqrt{\iota_{n_x}} (\zeta_n^-(x)-\zeta^-(x)) \, \sqrt{\frac{\kappa_{n_x}}{\iota_{n_x}}} + o_p(1)
\end{split}
\end{equation*}
converges in distribution as stated in the theorem if $c_x$ is non-zero. This verif\mbox{}ies the claims and proves the theorem.
\end{proof}




\subsection{Component distributions}

To estimate the component distributions, choose $B=\mathcal{X}-A$ so that $A$ and $B$   partition $\mathcal{X}$. Equations \eqref{eq:H} and \eqref{eq:G} then suggest the estimators
\begin{equation} \label{eq:estGH}
\begin{split}
H_n(y;A,B) 
& \equiv 
F_n(y\vert A) - \frac{1}{1-\zeta_n^{+}(B,A)} \, \big( F_n(y\vert A)-F_n(y\vert B) \big),
\\
G_n(y;A,B) 
& \equiv 
F_n(y\vert A) - \frac{1}{1-\zeta_n^{-}(B,A)} \, \big( F_n(y\vert A)-F_n(y\vert B) \big) .
\end{split}
\end{equation}
For notational simplicity we now drop $A$ and $B$ from the arguments: $G_n(y) \equiv G_n(y;A,B)$ and $H_n(y)\equiv H_n(y;A,B)$.

To state their asymptotic behavior, let
\begin{eqnarray*}
d_{G}(A,B;y)
& \equiv &
\frac{F(y\vert A)-F(y\vert B)}{(1-\zeta^-(B,A))^2}, 
\\
d_{H}(A,B;y)
& \equiv &
\frac{F(y\vert A)-F(y\vert B)}{ (1-\zeta^+(B,A))^2}, 
\end{eqnarray*}
and let $\lVert\cdot \rVert_{\infty}$ denote the supremum norm.

\begin{theorem} \label{thm:distributions}
Under the conditions of Theorem \ref{thm:xi}, 
$$
\lVert G_n-G \rVert_\infty = o_p(1), \qquad
\lVert H_n-H \rVert_\infty = o_p(1), 
$$
as $n\uparrow +\infty$. 

\noindent
Under the conditions of Theorem \ref{thm:xi2}, 
\begin{eqnarray*}
\sqrt{\iota_{n_A}} (G_n(y)-G(y))  & \overset{d}{\rightarrow} & 
\mathcal{N}\left(0,d_{G}(A,B;y)^2 \, \sigma_{-}^2(B,A)\right),
\\
\sqrt{\kappa_{n_A}} (H_n(y)-H(y)) & \overset{d}{\rightarrow} & 
\mathcal{N}\left(0, d_{H}(A,B;y)^2 \, \sigma_{+}^2(B,A)\right),
\end{eqnarray*}
as $n\uparrow+\infty$ for each $y\in\mathbb{R}$,.
\end{theorem}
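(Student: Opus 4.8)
The plan is to establish both the consistency and the asymptotic-normality claims by treating $G_n(y)$ and $H_n(y)$ as smooth functions of the already-analyzed quantities $\zeta_n^{-}(B,A)$, $\zeta_n^{+}(B,A)$ and the ordinary empirical distribution functions $F_n(y\vert A)$, $F_n(y\vert B)$, and then applying the delta method. The structure of the estimators in \eqref{eq:estGH} makes this natural: each is an affine function of $F_n(y\vert A)-F_n(y\vert B)$ with a random coefficient $1/(1-\zeta_n^{\pm}(B,A))$ that converges by Theorems \ref{thm:xi} and \ref{thm:xi2}. First I would record the two deterministic identities \eqref{eq:H}--\eqref{eq:G} so that $G_n-G$ and $H_n-H$ can be written as the difference between a plug-in and its population target sharing a common algebraic form.

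For consistency I would argue as follows. By Glivenko--Cantelli for each of the two subsamples (valid under Assumption \ref{ass:sampling}), $\lVert F_n(\cdot\vert A)-F(\cdot\vert A)\rVert_\infty$ and $\lVert F_n(\cdot\vert B)-F(\cdot\vert B)\rVert_\infty$ are $o_p(1)$; by Theorem \ref{thm:xi}, $\zeta_n^{\pm}(B,A)\overset{p}{\to}\zeta^{\pm}(B,A)$, and since irreducibility (Assumption \ref{ass:relevance}) and the identification argument guarantee $1-\zeta^{\pm}(B,A)\neq 0$, the reciprocal $1/(1-\zeta_n^{\pm}(B,A))$ converges in probability to a finite limit by the continuous mapping theorem. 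Writing $H_n(y)-H(y)$ as a sum of terms, each a product of a uniformly small factor and a bounded one, and taking the supremum over $y$, yields $\lVert H_n-H\rVert_\infty=o_p(1)$, and symmetrically for $G$.

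For the limit distribution, fix $y$ and perform a first-order expansion of $H_n(y)$ in $\zeta_n^{+}(B,A)$ around $\zeta^{+}(B,A)$; the derivative of $1/(1-\zeta^{+})$ applied to the factor $F(y\vert A)-F(y\vert B)$ produces exactly the Jacobian $d_H(A,B;y)$. The expansion exhibits $\sqrt{\kappa_{n_A}}(H_n(y)-H(y))$ as $d_H(A,B;y)\,\sqrt{\kappa_{n_A}}(\zeta_n^{+}(B,A)-\zeta^{+}(B,A))$ plus remainder terms. The remainder contains $\sqrt{\kappa_{n_A}}$ times the ordinary empirical fluctuations $F_n(y\vert A)-F(y\vert A)$ and $F_n(y\vert B)-F(y\vert B)$; since these are $O_p(1/\sqrt{n_A})=O_p(1/\sqrt{n})$ while $\kappa_{n_A}/n\to 0$ by Assumption \ref{ass:orderstats2}, each such term is $o_p(1)$. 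Hence only the $\zeta_n^{+}$ term survives, and applying Theorem \ref{thm:xi2} (which gives $\sqrt{\kappa_{n_A}}(\zeta_n^{+}(B,A)-\zeta^{+}(B,A))\overset{d}{\to}\mathcal{N}(0,\sigma_+^2(B,A))$) together with Slutsky's theorem delivers the stated normal law with variance $d_H(A,B;y)^2\sigma_+^2(B,A)$. The argument for $G_n(y)$ is identical, expanding in $\zeta_n^{-}(B,A)$ and scaling by $\sqrt{\iota_{n_A}}$.

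The main obstacle is verifying that the empirical-process remainder is genuinely negligible relative to the slow $\sqrt{\kappa_{n_A}}$ (resp.\ $\sqrt{\iota_{n_A}}$) rate. This is where the rate condition $\kappa_{n_A}/n_A\downarrow 0$ of Assumption \ref{ass:orderstats2} is essential: it guarantees that the component estimators, which inherit the slow tail-driven rate of the $\zeta_n^{\pm}$, are dominated by that term rather than by the root-$n$ central fluctuations of $F_n$. A secondary point requiring care is that the convergence holds pointwise in $y$ rather than uniformly, because the scaling $\sqrt{\kappa_{n_A}}$ is tied to the tail estimator and the Jacobian $d_H(A,B;y)$ varies with $y$; I would therefore state the normality claim for each fixed $y$, exactly as in the theorem.
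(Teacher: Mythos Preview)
Your proposal is correct and follows essentially the same route as the paper: consistency via Theorem~\ref{thm:xi} plus Glivenko--Cantelli, and asymptotic normality via a linearization in $\zeta_n^{\pm}(B,A)$ (the paper writes this out as an explicit three-term decomposition $T_1+T_2+T_3$, where $T_1,T_2$ carry the empirical-process fluctuations and $T_3$ carries the $\zeta_n^{\pm}$ fluctuation) together with Theorem~\ref{thm:xi2}. If anything, your justification for the negligibility of the empirical-process remainders---invoking the $O_p(n^{-1/2})$ rate and $\kappa_{n_A}/n_A\downarrow 0$ from Assumption~\ref{ass:orderstats2}---is slightly more explicit than the paper's one-line appeal to Glivenko--Cantelli.
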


\begin{proof}
Consistency follows by Theorem \ref{thm:xi} and the Glivenko-Cantelli theorem.

We establish the asymptotic distribution of $G_n$; the result for $H_n$ follows by the same argument.

First note that
$$
\sqrt{\iota_{n_A}} (G_n(y)-G(y))  = T_1 + T_2 + T_3
$$
for
\begin{eqnarray*}
T_1  & \equiv &
\sqrt{\iota_{n_A}} (F_n(y\vert A)-F(y\vert A)),
\\
T_2 & \equiv &
-\frac{1}{1-\zeta^-(B,A)} \,
\sqrt{\iota_{n_A}} \, 
\left( 
\big\lbrace F_n(y\vert A) - F(y\vert A) \big\rbrace - \big\lbrace F_n(y\vert B) - F(y\vert B) \big\rbrace
\right),
\\
T_3 & \equiv &
- (F_n(y\vert A)-F_n(y\vert B)) \, \sqrt{\iota_{n_A}} \left(\frac{1}{1-\zeta_n^-(B,A)}-\frac{1}{1-\zeta^-(B,A)}\right).
\end{eqnarray*}
By the Glivenko-Cantelli theorem, $T_1=o_p(1)$ and $T_2=o_p(1)$ while
$$
T_3 = 
- (F(y\vert A)-F(y\vert B)) \, \sqrt{\iota_{n_A}} \left(\frac{1}{1-\zeta_n^-(B,A)}-\frac{1}{1-\zeta^-(B,A)}\right)+ o_p(1).
$$
A linearization of this expression in $\zeta_n^{-}(B,A)-\zeta^-(B,A)$ together with an application of Theorem \ref{thm:xi2} to the partition $A,B$ then yields the result.
\end{proof}

When $X$ can take on more than two values there are multiple ways of choosing the sets $A$ and $B$. Inspection of the asymptotic variance does not give clear guidance on how to choose $A$ and $B$ in an optimal manner. An ad-hoc way to proceed when the number of possible choices for $A,B$ is small, is to simply compute estimators for all possible choices. Alternatively, it would be possible to combine estimates based on multiple choices through a minimum-distance procedure. We leave a detailed analysis for future research.

\subsection{Specif\mbox{}ication testing}


An implication of our model restrictions is that the estimators of $G$ and $H$ in \eqref{eq:estGH}, when based on dif\mbox{}ferent subsets of $\mathcal{X}$,  should co-incide with one another, up to sampling error. This observation suggests the possibility to test the specif\mbox{}ication when $X$ can take on more than two values.

Theorem \ref{thm:exclusion} provides the relevant asymptotic distributional result to perform this test. In it we use
\begin{equation*}
\begin{split}
\Sigma_G
& =
d_G(A,C) \left\lbrace d_G(A,C) \, \sigma_{-}^2(C,A) - d_G(A,B) \, \zeta^{-}(C,A) \zeta^{-}(B,A) \right\rbrace
\\
& +
d_G(A,B) \left\lbrace d_G(A,B) \, \sigma_{-}^2(B,A) - d_G(A,C) \, \zeta^{-}(C,A) \zeta^{-}(B,A) \right\rbrace
\end{split}
\end{equation*}
and
\begin{equation*}
\begin{split}
\Sigma_H
& =
d_H(A,C) \left\lbrace d_H(A,C) \, \sigma_{+}^2(C,A) - d_H(A,B) \, \zeta^{+}(C,A) \zeta^{+}(B,A) \right\rbrace
\\
& +
d_H(A,B) \left\lbrace d_H(A,B) \, \sigma_{+}^2(B,A) - d_H(A,C) \, \zeta^{+}(C,A) \zeta^{+}(B,A) \right\rbrace ,
\end{split}
\end{equation*}
where the triple $A,B,C$ constitutes any partition of $\mathcal{X}$ and, for any $A$ and $B$, we write
$$
d_G(A,B) \equiv \mathbb{E}[W(Y) d_G(A,B;Y)], \qquad
d_H(A,B) \equiv \mathbb{E}[W(Y) d_H(A,B;Y)],
$$
for a chosen weight function $W$ that is bounded on $\mathbb{R}$. The  choice of these weights should reflect the analyst's concerns about potential violations of our assumptions in the application under study.

\begin{theorem}[Specif\mbox{}ication testing] \label{thm:exclusion}
Under the conditions of Theorem \ref{thm:xi2}
$$
\lim_{n\uparrow +\infty} \mathbb{P}
\left\lbrace \left\lvert
\frac{n^{-1}\sum_{i=1}^n W(Y_i)G_n(Y_i; A,B) - n^{-1}\sum_{i=1}^nW(Y_i)G_n(Y_i; A,C)}{\sqrt{\Sigma_G } / \sqrt{\iota_{n_A}}}
\right\rvert
 > z(\tau/2)
\right\rbrace = \tau ,
$$
and
$$
\lim_{n\uparrow +\infty} \mathbb{P}
\left\lbrace \left\lvert
\frac{n^{-1}\sum_{i=1}^nW(Y_i)H_n(Y_i; A,B) - n^{-1}\sum_{i=1}^nW(Y_i)H_n(Y_i; A,C)}{\sqrt{\Sigma_H } / \sqrt{\kappa_{n_A}}}
\right\rvert
 > z(\tau/2)
\right\rbrace = \tau ,
$$
where $z(\tau)$ is the $1-\tau$ quantile of the standard-normal distribution.
\end{theorem}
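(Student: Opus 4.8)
The plan is to prove the statement for the $G$-based statistic; the one for $H$ follows line for line after swapping left tails for right tails, $\iota_{n_A}$ for $\kappa_{n_A}$, $\zeta^-$ for $\zeta^+$, and $d_G$ for $d_H$. Since $\Sigma_G$ is a fixed population quantity, it suffices by Slutsky's theorem to show that the scaled difference $\sqrt{\iota_{n_A}}\,(m_n(B)-m_n(C))$, where $m_n(S)\equiv n^{-1}\sum_{i=1}^n W(Y_i)\,G_n(Y_i;A,S)$, converges to $\mathcal{N}(0,\Sigma_G)$; under Assumption \ref{ass:model} both $G_n(\cdot;A,B)$ and $G_n(\cdot;A,C)$ estimate the \emph{same} $G$, so no deterministic bias survives in the difference. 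The stated probability is then $\mathbb{P}(\lvert\mathcal{N}(0,1)\rvert>z(\tau/2))=\tau$ by definition of the normal quantile.

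First I would linearise each $m_n(S)$. Reusing the expansion $\sqrt{\iota_{n_A}}(G_n(y;A,S)-G(y))=T_1+T_2+T_3$ from the proof of Theorem \ref{thm:distributions}, the terms $T_1,T_2$ are governed by $F_n(\cdot\vert A)-F(\cdot\vert A)$ and $F_n(\cdot\vert S)-F(\cdot\vert S)$, hence are $O_p(\sqrt{\iota_{n_A}/n_A})=o_p(1)$ once weighted by the bounded $W$ and averaged, using $\iota_{n_A}/n_A\downarrow 0$ (Assumption \ref{ass:orderstats2}); the surviving contribution is $T_3=-d_G(A,S;y)\,\sqrt{\iota_{n_A}}\,(\zeta_n^-(S,A)-\zeta^-(S,A))+o_p(1)$. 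Because $d_G(A,S;y)$ and $W$ are bounded, the law of large numbers replaces $n^{-1}\sum_i W(Y_i)\,d_G(A,S;Y_i)$ by $d_G(A,S)$, so with $\xi_n(S)\equiv \sqrt{\iota_{n_A}}\,(\zeta_n^-(S,A)-\zeta^-(S,A))$ I obtain $\sqrt{\iota_{n_A}}\,(m_n(S)-n^{-1}\sum_i W(Y_i)G(Y_i))=-d_G(A,S)\,\xi_n(S)+o_p(1)$. The common term cancels in the difference, leaving
\begin{equation*}
\sqrt{\iota_{n_A}}\,(m_n(B)-m_n(C))=-d_G(A,B)\,\xi_n(B)+d_G(A,C)\,\xi_n(C)+o_p(1).
\end{equation*}

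The crux — and the main obstacle — is the joint limit of $(\xi_n(B),\xi_n(C))$, since a black-box appeal to Theorem \ref{thm:xi2} delivers only the two marginals. I would reopen its proof. Both $\zeta_n^-(B,A)$ and $\zeta_n^-(C,A)$ are formed from the \emph{same} left-tail order statistic $\ell_{n_A}$ of subsample $A$ and share the denominator $F_n(\ell_{n_A}\vert A)$; through the probability-integral-transform representation, $\xi_n(S)\overset{d}{\rightarrow}\zeta^-(S,A)\,Z_A-\sqrt{\rho_{S,A}\,\zeta^-(S,A)}\,Z_S$, where $Z_A$ is the standard-normal limit of the rescaled uniform tail empirical process $\alpha_{n_A}$ of subsample $A$ at that threshold and $Z_S$ is the analogue for subsample $S$. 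Since $A,B,C$ are disjoint and the data are i.i.d.\ (Assumption \ref{ass:sampling}), $Z_A,Z_B,Z_C$ are mutually independent standard normals. The \emph{same} $Z_A$ enters both $\xi_n(B)$ and $\xi_n(C)$, so their marginal variances are $\zeta^-(S,A)^2+\rho_{S,A}\zeta^-(S,A)=\sigma_-^2(S,A)$ while their only surviving covariance is $\zeta^-(B,A)\,\zeta^-(C,A)$.

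Assembling the pieces, the displayed linear combination is asymptotically normal with mean zero and variance
\begin{equation*}
d_G(A,B)^2\,\sigma_-^2(B,A)+d_G(A,C)^2\,\sigma_-^2(C,A)-2\,d_G(A,B)\,d_G(A,C)\,\zeta^-(B,A)\,\zeta^-(C,A),
\end{equation*}
which is exactly $\Sigma_G$ after regrouping the terms. Dividing by $\sqrt{\Sigma_G}$ gives a standard-normal limit and hence the claimed probability. The $H$-statistic is handled identically using the right-tail representation from Theorem \ref{thm:xi2}: $\zeta_n^+(B,A)$ and $\zeta_n^+(C,A)$ share the threshold $r_{n_A}$ and the subsample-$A$ component, so they converge jointly at rate $\sqrt{\kappa_{n_A}}$ with covariance $\zeta^+(B,A)\,\zeta^+(C,A)$, and the weighted averages linearise with Jacobian $d_H$, producing variance $\Sigma_H$.
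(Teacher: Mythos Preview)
Your proof is correct and follows the same route as the paper: linearise the difference in $\zeta_n^-(B,A)$ and $\zeta_n^-(C,A)$ via the $T_3$ term from Theorem~\ref{thm:distributions}, average, and read off the normal limit. The paper's proof is terser---it simply cites Theorem~\ref{thm:xi2} at the last step---whereas you correctly recognise that Theorem~\ref{thm:xi2} only delivers the two marginals, and you explicitly reopen its argument to extract the covariance $\zeta^-(B,A)\,\zeta^-(C,A)$ induced by the shared $Z_A$ component; this is exactly the cross term in $\Sigma_G$, and your derivation of it is sound.
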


\begin{proof}
We consider only the case of $G$. The dif\mbox{}ference $G_n(y; A,B) - G_n(y; A,C)$ equals
$$
\frac{1}{1-\zeta^{-}_n(C,A)}
(F_n(y\vert A) - F_n(y\vert C))
-
\frac{1}{1-\zeta^{-}_n(B,A)}
(F_n(y\vert A) - F_n(y\vert B))
$$
for any $y$. An expansion around $\zeta^{-}(C,A)$ and $\zeta^{-}(B,A)$, then shows that the scaled dif\mbox{}ference $\sqrt{\iota_{n_A}}G_n(y; A,B) - G_n(y; A,C)$ is asymptotically equivalent to
$$
d_G(A,C;y) \, \sqrt{\iota_{n_A}} \, \big(\zeta_n^-(C,A)-\zeta^-(C,A)\big)
-
d_G(A,B;y) \, \sqrt{\iota_{n_A}} \, \big(\zeta_n^-(B,A)-\zeta^-(B,A)\big).
$$
This holds for any $y$ and, therefore, also for the weighted average over $y$. Together with Theorem \ref{thm:xi2}, this result then readily yields the asymptotic distribution of the dif\mbox{}ference $n^{-1}\sum_{i=1}^n W(Y_i)G_n(Y_i; A,B) - n^{-1}\sum_{i=1}^nW(Y_i)G_n(Y_i; A,C)$ and implies the claim of the theorem.
\end{proof}

We leave a detailed analysis of the power properties of this specif\mbox{}ication test for future research. Here, we provide a consistency result against failure of Assumption \ref{ass:dominance}.

\begin{example}[Consistency of the test]
Suppose that $H$ dominates $G$ in both tails. Then $H$ is no longer identif\mbox{}ied and
$$
\lim_{n\uparrow +\infty} 
\mathbb{P}
\left\lbrace \left\lvert
\frac{n^{-1}\sum_{i=1}^nW(Y_i)H_n(Y_i; A,B) - n^{-1}\sum_{i=1}^nW(Y_i)H_n(Y_i; A,C)}{\sqrt{\Sigma_H } / \sqrt{\kappa_{n_A}}}
\right\rvert
 > z
\right\rbrace = 1
$$
for any $z$.
\end{example}

\begin{proof}
When $H$ dominates $G$ in both tails, a small calculation reveals that
$$
\zeta_n^+(A,B) = \zeta^-(A,B) + o_p(1),
$$
and so  
$
\sqrt{\kappa_{n_A}} \, \lvert (\zeta_n^+(A,B) -\zeta^+(A,B) ) \rvert
$
grows without bound as $n\uparrow +\infty$. The conclusion then readily follows from the linearization in the proof of Theorem \ref{thm:exclusion}.
\end{proof}

\section{Simulation experiments}
\label{sec:MC}

In our numerical illustrations we will work with the family of skew-normal distributions (\citealt{Azzalini1985}). The skew-normal distribution  with location $\mu$, positive scale $\sigma$, and skewness parameter $\beta$ multiplies the density of $\mathcal{N}(\mu,\sigma^2)$ by a term that skews it to the right if $\beta>0$ and to the left if $\beta<0$:
$$
f(x;\mu,\sigma,\beta) \equiv
\frac{1}{\sigma} \, \phi\left(\frac{x-\mu}{\sigma}\right) \times \frac{\Phi\left(\beta \, \frac{x-\mu}{\sigma}\right)}{\Phi(0)} .
$$
Its  mean and variance are $\mu+\sigma\delta \sqrt{\frac{2}{\pi}}$ and $\sigma^2 \left(1-\frac{2\delta^2}{\pi}\right)$, respectively, where $\delta \equiv \beta/\sqrt{1+\beta^2}$.  Clearly, 
$$
f(x;\mu,\sigma,\beta) \rightarrow \frac{1}{\sigma} \, \phi\left(\frac{x-\mu}{\sigma}\right)
$$
as $\beta\rightarrow 0$.

In our simulations we will consider data generating processes where the outcome is generated as
\begin{equation} \label{eq:mc}
Y = T \, V_G + (1-T) \, V_H,
\end{equation}
where $T$ is a latent binary variable, and $V_G\sim G$ and $V_H\sim H$. Both error distributions $G$ and $H$ are skewed-normal distributions with parameters $\mu_G,\sigma_G,\beta_G$ and $\mu_H,\sigma_H,\beta_H$, respectively.

From \cite{Capitanio2010} it follows that Assumption \ref{ass:tailrates} holds if $G$ is right-skewed and $H$ is left-skewed. We will consider designs where $\beta_G>0$ and $\beta_H<0$ to verify our asymptotics. 

When $\beta_G=\beta_H=0$, \eqref{eq:mc} collapses to a standard location model with normal errors
\begin{equation} \label{eq:mc2}
Y =  (\mu_G-\mu_H) \, T + \, V, \qquad V \sim \mathcal{N}(0,\sigma_G^2+\sigma_H^2).
\end{equation}
The identifying tail condition in Assumption \ref{ass:dominance} still holds if $\mu_G> \mu_H$, and our estimators remain consistent. However, Assumption \ref{ass:tailrates} now fails and so we may expect poor inference in this design.

In our experiments we generate a binary $X$ with $\mathbb{P}(X=1)=\frac{1}{2}$ and f\mbox{}ix conditional probabilities as
\begin{eqnarray*}
 \mathbb{P}(T=0\vert X=0)  = \frac{3}{4}, && \mathbb{P}(T=1\vert X=0) = \frac{1}{4}, \\
 \mathbb{P}(T=1\vert X=1)  = \frac{1}{4}, && \mathbb{P}(T=1\vert X=1) = \frac{3}{4}. 
\end{eqnarray*}
We present results for data generating processes where $\mu_G=\mu=-\mu_H$ and $\beta_G=\beta=-\beta_H$. We use the designs $\mu=0 $ and $\beta\in \lbrace 2.5, 5 \rbrace$ to evaluate the adequacy of our asymptotic arguments for small-sample inference. We also look at the performance of our estimators when $\mu\in\lbrace .5, 1 \rbrace$ and $\beta=0$, which yields the Gaussian  location model in~\eqref{eq:mc2}. We f\mbox{}ix $\sigma_G=\sigma_H=1$ throughout. For each of these designs we consider choices of the empirical quantiles as
$$
\iota_{n_x} = C \, (n_x \ln \ln n_x)^{6/10}, \qquad
\kappa_{n_x} = C \, (n_x \ln \ln n_x)^{6/10}, 
$$
for several choices of the constant $C$. All of these choices are in line with our asymptotic arguments. The larger the constant $C$ the more conservative the choice of intermediate quantile,
$$
q_\ell \equiv \frac{\iota_{n_x}}{n_x}, \qquad q_r \equiv \frac{n_x-\kappa_{n_x}}{n_x},
$$
for a given sample size. 

We run experiments for sample sizes $n\in \lbrace 500; 1,000, 2,500; 5,000; 10,000; 25,000 \rbrace.$ We report (the average over the replications of) $q_\ell$ and $q_r$ along with the estimation results to get an idea of how far in the tails of the component distributions we are going to obtain the results. A data-driven determination of the constant $C$ is challenging and is left for future research.  For space considerations we report only a subset of the results here. The full set of simulation results is available in the working paper version of this paper \citep{JochmansHenrySalanie2014}.

Tables \ref{table:design1full} and \ref{table:remaining} report the results for the mixing proportions $\lambda(0)$ and $\lambda(1)$. Each table contains the bias, standard deviation (SD), ratio of the (average over the replications of the) estimated standard error to the standard deviation (SE/SD), and the coverage of $95\%$ conf\mbox{}idence intervals (CI95) for $n\in \lbrace 1,000, 10,000 \rbrace$. All these statistics were computed from $10,000$ Monte Carlo replications. Table \ref{table:design1full} reports results for the simulation design with $\mu=0,\beta=5$ for $C\in\lbrace .5, 1, 1.5  \rbrace$, so as to evaluate the impact of the choice of this tuning parameter on the results. This impact was similar in all other designs and so, for these designs, we present only results for one choice of $C$. The constant $C$ was f\mbox{}ixed to $.5$ for all designs except for the pure location model with $\mu=.5$ and $\beta=0$, where, for practical reasons, we use $C=.75$.\footnote{In this design, there is a small probability that either $q_\ell=0$ or $q_{r}=1$ when $C=.5$ and $n$ is small. This shows up in simulations with a large number of replications, as is the case here. The slightly more conservative choice of $C=.75$ avoids this issue.} These results are bundled in Table \ref{table:remaining}.

\begin{table}[htbp]
\caption{Mixing proportions}
\begin{center}
\begin{tabular}{rrrrrrrrrrr}

\hline\hline
\multicolumn{3}{l}{} & \multicolumn{2}{c}{BIAS}  & \multicolumn{2}{c}{SD} & \multicolumn{2}{c}{SE/SD} & \multicolumn{2}{c}{CI$95$}  \\ 
\multicolumn{1}{c}{$n$} & \multicolumn{1}{c}{$q_\ell$} & \multicolumn{1}{c}{$q_r$} & \multicolumn{1}{l}{$\lambda_n(0)$} & \multicolumn{1}{l}{$\lambda_n(1)$} & \multicolumn{1}{l}{$\lambda_n(0)$} & \multicolumn{1}{l}{$\lambda_n(1)$} & \multicolumn{1}{l}{$\lambda_n(0)$} & \multicolumn{1}{l}{$\lambda_n(1)$} & \multicolumn{1}{l}{$\lambda_n(0)$} & \multicolumn{1}{l}{$\lambda_n(1)$} \\ 
\hline
\multicolumn{11}{c}{$C=.5$}  \\

$ 1,000$ & $.059$ & $.940$ & $.0060$ & $-.0059$ & $.0693$ & $.0701$ & $1.0554$ & $1.0392$&$.9688$&$.9682$ \\ 
$10,000$ & $.026$ & $.974$ & $.0012$ & $-.0011$ & $.0328$ & $.0325$ & $1.0106$ & $1.0213$&$.9560$&$.9572$ \\ 

\hline
\multicolumn{11}{c}{$C=1$}  \\ 

$ 1,000$ & $.120$ & $.880$ & $.0024$ & $-.0035$ & $.0439$ & $.0446$ & $1.1358$ &$1.1220$ &$.9764$&$.9752$ \\ 
$10,000$ & $.052$ & $.947$ & $.0007$ & $-.0003$ & $.0225$ & $.0222$ & $1.0360$ &$1.0519$ &$.9566$&$.9616$ \\ 

\hline
\multicolumn{11}{c}{$C=1.5$}  \\ 

$ 1,000$ & $.179$ & $.821$ & $.0046$ & $-.0037$ & $.0316$ & $.0315$ & $1.2931$ & $1.2933$ &$.9944$&$.9920$\\ 
$10,000$ & $.078$ & $.922$ & $.0002$ & $-.0010$ & $.0175$ & $.0174$ & $1.0873$ & $1.0962$ &$.9646$&$.9710$\\ 

\hline\hline

\end{tabular}
\end{center}
\label{table:design1full}
\end{table}

The results in Table \ref{table:design1full} support our asymptotic theory. For  all choices of the tuning parameter $C$, the bias and standard deviation shrink to zero as $n\uparrow+\infty$; and the bias is small relative to the standard error. Furthermore, $\mathrm{SE}/\mathrm{SD}\rightarrow 1$ and the coverage rates of the conf\mbox{}idence intervals are close to $.95$ in large samples.  The variability of the point estimates is somewhat overestimated when $n$ is very small and $C$ is chosen conservatively. Together with the relatively small bias, this implies that conf\mbox{}idence intervals are slightly conservative. For $C=.5$, coverage rates are close to $.95$, even for the smallest samples considered, and for all $C$, the coverage rates move fairly quickly toward $.95$ as $n$ increases. The same conclusions hold for the design with $\mu=0$ and $\beta=2.5$ (f\mbox{}irst block of Table \ref{table:remaining}).

\begin{table}[htbp]
\caption{Mixing proportions (cont'd)}
\begin{center}
\begin{tabular}{rrrrrrrrrrr}

\hline\hline
\multicolumn{3}{l}{} & \multicolumn{2}{c}{BIAS}  & \multicolumn{2}{c}{SD} & \multicolumn{2}{c}{SE/SD} & \multicolumn{2}{c}{CI$95$}  \\ 
\multicolumn{1}{c}{$n$} & \multicolumn{1}{c}{$q_\ell$} & \multicolumn{1}{c}{$q_r$} & \multicolumn{1}{l}{$\lambda_n(0)$} & \multicolumn{1}{l}{$\lambda_n(1)$} & \multicolumn{1}{l}{$\lambda_n(0)$} & \multicolumn{1}{l}{$\lambda_n(1)$} & \multicolumn{1}{l}{$\lambda_n(0)$} & \multicolumn{1}{l}{$\lambda_n(1)$} & \multicolumn{1}{l}{$\lambda_n(0)$} & \multicolumn{1}{l}{$\lambda_n(1)$} \\ 
\hline
\multicolumn{11}{c}{$\mu=0$ and $\beta=2.5$}  \\

$ 1,000$ & $.059$ & $.940$ & $.0066$ & $-.0072$ & $.0722$ & $.0718$ & $1.0151$ & $1.0194$ &$.9646$&$.9652$\\ 
$10,000$ & $.026$ & $.974$ & $.0012$ & $-.0015$ & $.0323$ & $.0326$ & $1.0287$ & $1.0193$ &$.9548$&$.9626$\\ 

\hline

\multicolumn{11}{c}{$\mu=1$ and $\beta=0$}  \\ 
$ 1,000$ &$.059$&$.940$& $.0144$ & $-.0164$ & $.0720$ & $.0728$ & $1.0589$ & $1.0518$ & $.9807$ & $.9810$ \\ 
$10,000$ &$.026$&$.974$& $.0050$ & $-.0048$ & $.0327$ & $.0324$ & $1.0344$ & $1.0449$ & $.9614$ & $.9622$ \\ 

\hline

\multicolumn{11}{c}{$\mu=.5$ and $\beta=0$}  \\  
 
$ 1,000$ & $.090$ & $.910$ & $.0994$ & $-.1017$ & $.0842$ & $.0855$ & $1.1677$ & $1.1599$&$.9416$&$.9406$ \\ 
$10,000$ & $.039$ & $.961$ & $.0671$ & $-.0671$ & $.0358$ & $.0352$ & $1.0815$ & $1.0973$&$.6244$&$.6286$ \\ 

\hline\hline
\end{tabular}
\end{center}
\label{table:remaining}
\end{table}

Now turn to the results for the pure location model with Gaussian errors ($\beta=0$) in Table \ref{table:remaining}, where the tail conditions of Assumption \ref{ass:tailrates} fail. The dif\mbox{}ference between the two designs is the distance between the component distributions (governed by $\mu$). When $\mu=1$, $G$ is centered at $1$ while $H$ is centered at $-1$, so that  $\mu_G-\mu_H=2$. When $\mu=1/2$, $G$ and $H$ are closer to each other: $\mu_G-\mu_H=1$. In the f\mbox{}irst of these designs the bias in the point estimates is somewhat larger than in the skewed designs. Nonetheless, the bias is still small relative to the standard deviation. Furthermore, the coverage of the conf\mbox{}idence intervals displays a similar pattern as before, and is excellent when $n$ is not too small. When we move to the second design the bias increases further. The bias still shrinks to zero as $n$ grows, conf\mbox{}irming that our estimator remains consistent. However,  the bias is not negligible relative  to the standard deviation; the coverage of the conf\mbox{}idence intervals deteriorates as $n$ grows, and inference becomes unreliable.

\begin{figure} 
\caption{Simulation results for $G_n$ (left) and $H_n$ (right) for design $\mu=0$, $\beta=5$ (top) and design $\mu=0$, $\beta=2.5$ (bottom). Each plot contains the mean of the point estimator (solid red line) and the mean of the estimated $95\%$ conf\mbox{}idence bands (dashed blue lines), along with the true curve (solid black line, marked x) and $95\%$ conf\mbox{}idence bands constructed using the Monte Carlo standard deviation (dashed green lines, upper band marked $\bigtriangleup$ and lower band marked $\bigtriangledown$). } 
\includegraphics[width=1\textwidth,height=.75\textheight]{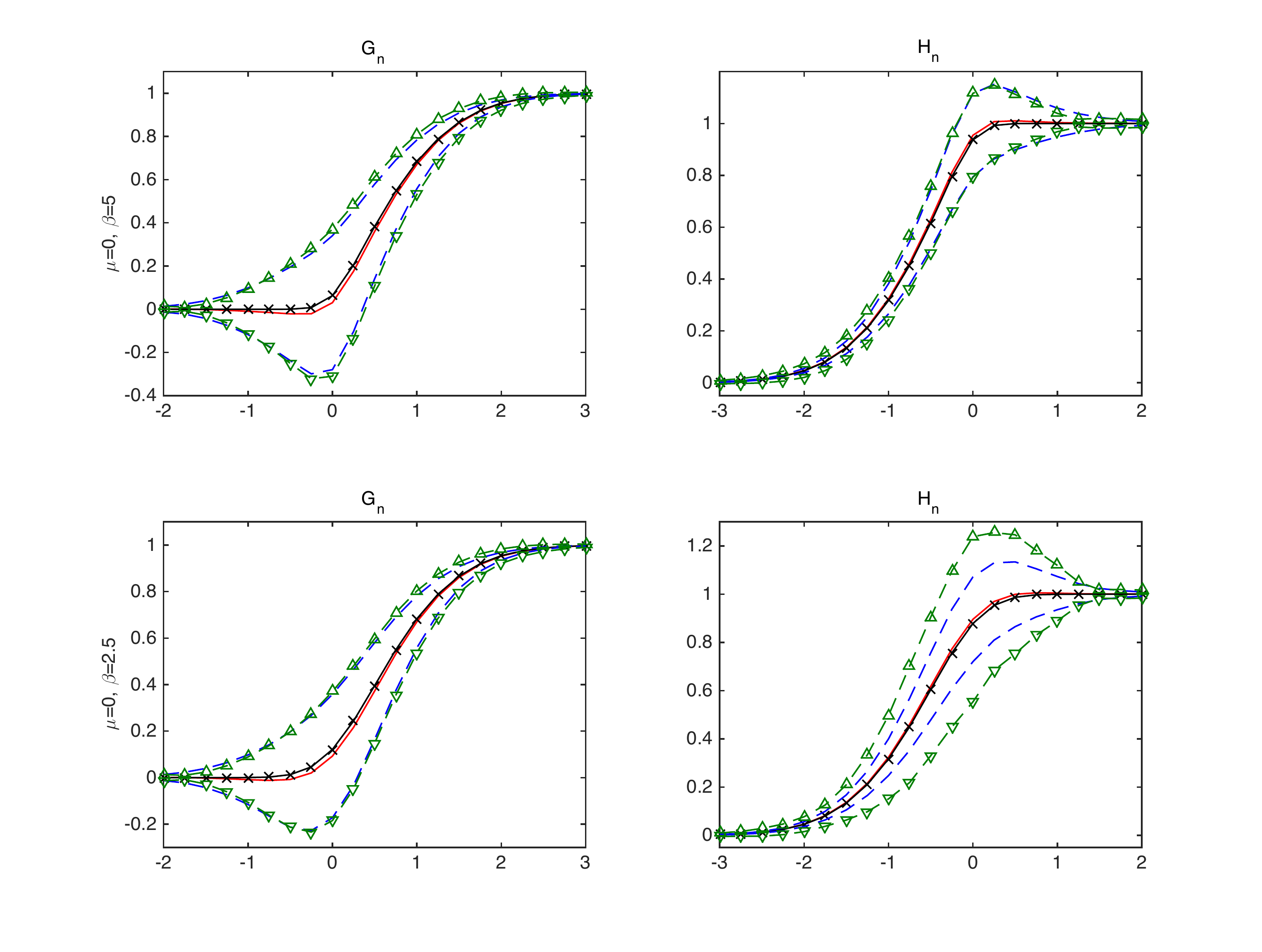}
\label{fig:plot1}
\begin{quote} \footnotesize \vspace{-.40cm}
\end{quote}
\end{figure}

\begin{figure} 
\caption{Simulation results for $G_n$ (left) and $H_n$ (right) for design $\mu=1$, $\beta=0$ (top) and design $\mu=0.5$, $\beta=0$ (bottom). Each plot contains the mean of the point estimator (solid red line) and the mean of the estimated $95\%$ conf\mbox{}idence bands (dashed blue lines), along with the true curve (solid black line, marked x) and $95\%$ conf\mbox{}idence bands constructed using the Monte Carlo standard deviation (dashed green lines, upper band marked $\bigtriangleup$ and lower band marked $\bigtriangledown$). } 
\includegraphics[width=1\textwidth,height=.75\textheight]{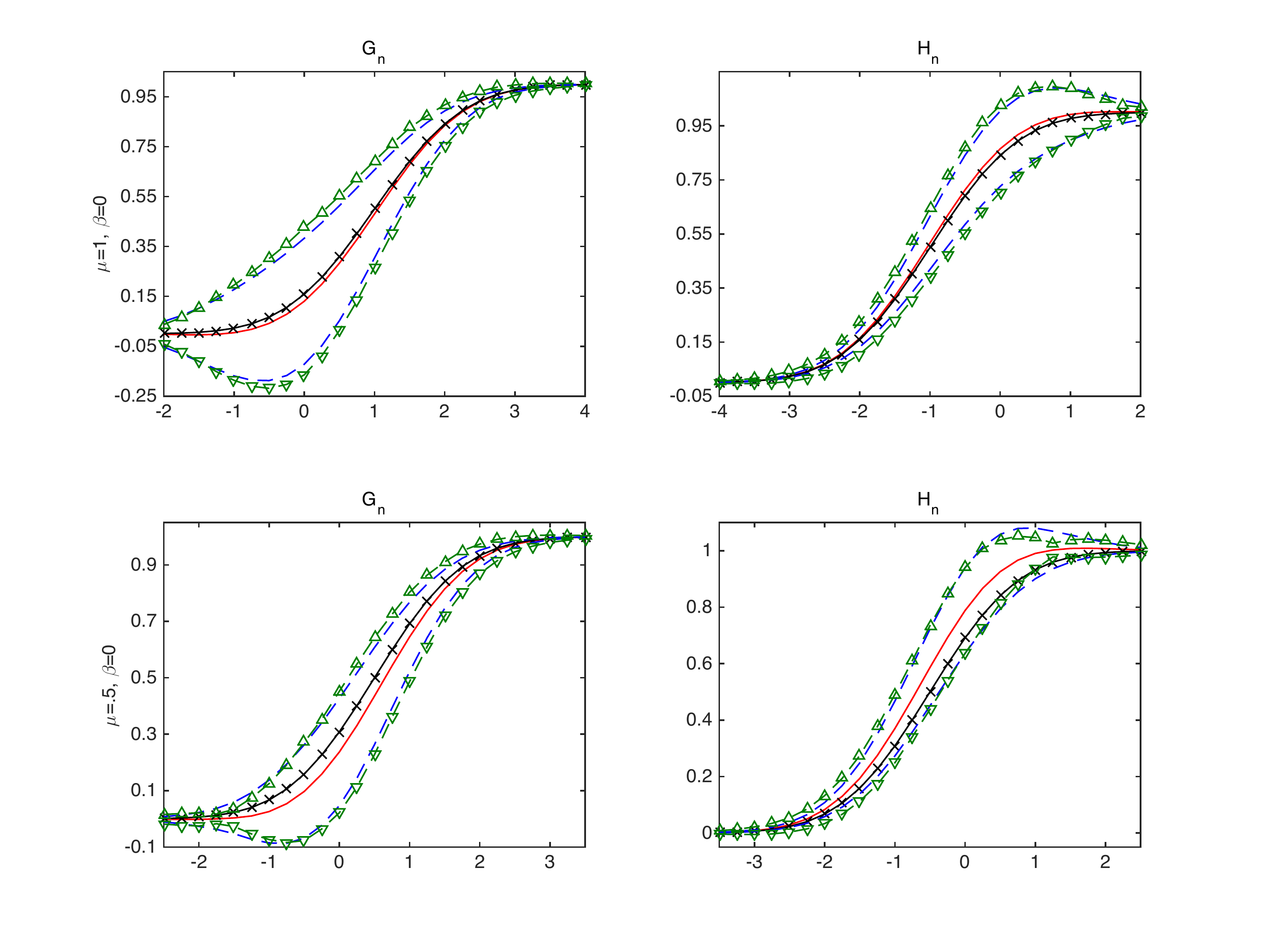}
\label{fig:plot2}
\begin{quote} \footnotesize \vspace{-.40cm}
\end{quote}
\end{figure}

We next turn to the results for the component distributions. For clarity we present the results by means of a series of plots. We provide results for $n=1,000$ for the skewed designs $\mu=0, \beta=5$ and $\mu=0, \beta=2.5$ in Figure \ref{fig:plot1} and for the symmetric designs $\mu=1, \beta=0$ and $\mu=0.5, \beta=0$ in Figure \ref{fig:plot2}. Results for $G_n$ are in the left-side plots. Results for $H_n$ are in the right-side plots. Each plot contains the mean of the point estimates (solid red lines) and the mean of $95\%$ conf\mbox{}idence bounds constructed around it using a plug-in estimator of the asymptotic variance in Theorem \ref{thm:distributions} (dashed blue lines). Each plot also contains the true component distribution (solid black lines, marked x) and the mean of $95\%$ conf\mbox{}idence bounds constructed around the point estimator using the empirical standard deviation over the Monte Carlo replications (dashed green lines, upper band marked $\bigtriangleup$, lower band marked $\bigtriangledown$). We vary the range of the vertical axis across the plots in a given f\mbox{}igure to enhance visibility.

The plots in Figure \ref{fig:plot1} again conf\mbox{}irm our asymptotics. The bias in the point estimators is small across all plots. The asymptotic theory mostly does a good job in capturing the small-sample variability of the point estimators although, when $n$ is small, the standard errors are somewhat too small. In our designs, this underestimation is more severe for $H_n$ than for $G_n$, as is apparent from inspection of the lower-right plot in the figure. Inspection of the full set of results (not reported here) shows that this underestimation vanishes as $n$ grows, again conf\mbox{}irming our asymptotic theory.

The results in Figure \ref{fig:plot2} for the Gaussian location model are in line with our f\mbox{}indings concerning the mixing proportions. In the design where $\mu_G-\mu_H=2$ (upper two plots) our estimators do well in spite of Assumption \ref{ass:tailrates} not holding. When the $\mu_G - \mu_H=1$ (lower two plots), however, the asymptotic bias in $G_n$ and $H_n$ becomes visible. While the variability of the point estimates is correctly captured by our asymptotic-variance estimator, the conf\mbox{}idence bounds settle around an incorrect curve.

\newpage

\section*{Concluding remarks}

We conducted most of our analysis with a mixture of two components. However, some of our results would extend to a version of \eqref{eq:model} with a larger number of components. Suppose that the mixture has $J$ irreducible components, as in
\[
F(y\vert x)=\sum_{j=1}^J  \lambda_j(x) \, G_j(y),
\]
in obvious notation. \cite{HenryKitamuraSalanie2014} showed that the mixture components and mixing proportions are only identif\mbox{}ied up to $J(J-1)$ inequality-constrained real parameters in general. 

Tail dominance restrictions can still be quite powerful. Take $J=3$ for instance, and assume that $G_1$ dominates in the left tail and $G_3$ dominates in the right tail. Then it is easy to adapt the proof of Theorem~\ref{thm:identification} to prove that the behavior of $F(y\vert x)$ in the left tail identif\mbox{}ies the function $\lambda_1$  up to a multiplicative constant, and that the behavior of $F(y\vert x)$ in the right tail identif\mbox{}ies the function $\lambda_3$  up to another multiplicative constant. Imposing the values of the  mixing proportions at one particular value of $x$ would be enough to point identify all elements of the model, for instance; and it would be easy to adapt our estimators and tests
to such a setting. Whether such additonal restrictions are plausible is, of course, highly model-dependent.

\theendnotes

\newpage

\section*{References} \vspace{-.75cm}
\bibliographystyle{chicago}\renewcommand\refname{}
\bibliography{tailrestrictions}

\end{document}